\newtheoremstyle{mythm}{3pt}{3pt}{}{16pt}{\bfseries}{:}{.5em}{}
\theoremstyle{mythm}
\newtheorem{theorem}{Theorem}
\newtheorem{example}{Example}
\newtheorem{definition}{Definition}
\newtheorem{proposition}{Proposition}
\newtheorem{corollary}{Corollary}
\newtheorem{lemma}{Lemma}
\newtheorem{construction}{Construction}
\begin{document}
\title{On the Placement Delivery Array Design for Coded Caching Scheme in D2D Networks
\author{Jinyu Wang, Minquan Cheng, Qifa Yan, Xiaohu Tang, \IEEEmembership{Member,~IEEE}
}
\thanks{J. Wang and M. Cheng are with College of Mathematics and Statistics and Guangxi Key Lab of Multi-source Information Mining $\&$ Security respectively, Guangxi Normal University,
Guilin 541004, China (e-mail: mathwjy@163.com, chengqinshi@hotmail.com).}
\thanks{Q. Yan and X. Tang are with the Information Security and National Computing Grid Laboratory,
Southwest Jiaotong University, Chengdu, 610031, China (e-mail: qifa@my.swjtu.edu.cn, xhutang@swjtu.edu.cn).}
}
\date{}
\maketitle

\begin{abstract}
The coded caching scheme is an efficient technique as a solution to reduce the wireless network burden during the peak times in a Device-to-Device (D2D in short) communications. In a coded caching scheme, each file block should be divided into $F$ packets. It is meaningful to design a coded caching scheme with the rate and $F$ as small as possible, especially in the practice for D2D network. In this paper we first characterize coded caching scheme for D2D network by a simple array called D2D placement delivery array (DPDA in shot). Consequently some coded caching scheme for D2D network can be discussed by means of an appropriate DPDA. Secondly we derive the lower bounds on the rate and $F$ of a DPDA. According these two lower bounds, we show that the previously known determined scheme proposed by Ji et al., ( IEEE Trans. Inform. Theory, 62(2): 849-869,2016) reaches our lower bound on the rate, but does not meet the lower bound on $F$ for some parameters. Finally for these parameters, we construct three classes of DPDAs which meet our two lower bounds. Based on these DPDAs, three classes of coded caching scheme with low rate and lower $F$ are obtained for D2D network.
\end{abstract}

\begin{IEEEkeywords}
Coded caching scheme, D2D placement delivery array, rate, packet number
\end{IEEEkeywords}

\section{Introduction}
Recently mainly due to asynchronous video on demand, wireless mobile data traffic is increasing rapidly. Consequently this will put an enormous strain on already-overburdened cellular networks. One of the most promising approaches for solving this problem is caching. In caching, popular contents are prefetched in the local storage across the network in off peak traffic times. During the peak traffic times, the users requests are partially served through the information stored in the local cache. So  the network load can be reduced. As a result, the caching scheme comprises of two separate phases, i.e., the content placement phase at the off peak traffic times and the content delivery phase at the peak traffic times.


In the seminal work \cite{MU}, Maddah-Ali and Niesen considered a system with a single omniscient central sever serving $K$ users through a common bottleneck link, and each user stores $M$ files from a library of $N$ files, where $0<M<N$. By jointly designing the content placement phase and the content delivery phase, a determined caching scheme, say MN scheme, was proposed to satisfy any arbitrary set of user demands with the transmission amount which reduces $1+\frac{KM}{N}$ times than that of uncoded caching scheme in the delivery phase. So far, many results have been obtained following MN schemes, for instances, \cite{AG,CYTJ,GR,STC,T,WLG,YTC} etc. Nevertheless, in \cite{MU} all communications must go through the central server, which would constitute a major bottleneck in the delivery phase. Fortunately, there is a common feature of {\em asynchronous content reuse} in video streaming applications, i.e., the same contents are requested by different users at different times \cite{NAAG}. It implies that user demands are highly redundant, then each user demand can be satisfied through local communication from a cache without traversing the central server. This communication paradigm is defined as Device-to-Device (D2D in short) communication \cite{YY}. D2D communication can highly increase the spectral efficiency of the network. In addition, it can potentially improve throughput, energy efficiency, delay, and fairness \cite{AQV}.

\subsection{Network model and problem definition}
In this paper, we focus on the caching system in D2D networks as shown in Fig. \ref{system}. Due to modeling the asynchronous content reuse and forbidding any form of uncoded multicasting gain, we also use the same assumption in \cite{MGA} that each file consists of $L$ equal blocks and each user requests $L'$ continue blocks of a file in the delivery phase.
\begin{figure}[htbp]
\centering\includegraphics[width=0.6\textwidth]{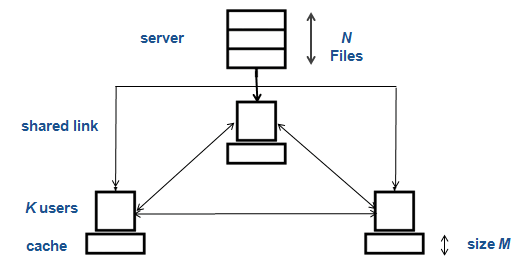}
\caption{Coded caching system with $N=3,K=3,M=1$}\label{system}
\end{figure}
And we also use the notations and the assumptions during the placement phase in \cite{MU,MGA}. That is, denote $K$ users by  $\mathcal{K}=\{0,1,\cdots,K-1\}$, and $N$ files by $\mathcal{W}=\{W_0$, $W_1$, $\ldots$, $W_{N-1}\}$. Each file is denoted by $W_{i}=(W^{0}_{i},W^{1}_{i},\ldots,W^{L-1}_{i})$, $i=$ $0$, $1$, $\ldots$, $N-1$. In order to satisfy any arbitrary demand vector, $M\geq N/K$ always holds. An $F$-division $(K,M,N,L,L')$ coded caching scheme discussed in \cite{MGA} operates in two phases:
\begin{itemize}
\item {\bf Placement Phase:} Each block is split into $F$ equal packets, i.e., $W^{l}_{i}=\{W^{l,0}_{i},W^{l,1}_{i} ,\ldots,W^{l,F-1}_{i} \}$, $i=0$, $1$, $\ldots$, $N-1$ and $l=0$, $1$, $\ldots$, $L-1$. Then the server places parts of packets in each user's cache independent of the user
demands which are assumed to be arbitrary. Denote the contents at user $k$ by $\mathcal{Z}_k$, where $k \in \mathcal{K}$.

\item{\bf Delivery Phase:} For any fixed positive integer $L'\in[1,L]$ and each integer $k\in[0,K)$, user $k$ requests $L'$ continue blocks of a random file $W_{d_k}$, i.e., $\{W{_{d_k}^{b_k}},W{_{d_k}^{b_k+1}},\cdots,W{_{d_k}^{b_k+L'-1}}\}$. Then according to the request vectors ${\bf d}=(d_0,d_1,\cdots,d_{K-1})$ and ${\bf b}=(b_0,b_1,\cdots,b_{K-1})$,
    each user broadcasts a signal to all other users using its cache contents such that each user is able to decode its requested blocks from the signals received in the delivery phase with the help of its own cache.
\end{itemize}

Denote the signal, which is broadcasted by user $k \in \mathcal{K}$, by $X_{k,{\bf d},{\bf b}}$ in the delivery phase. Suppose that the signal $X_{k,{\bf d},{\bf b}}$ is of length $S_{k,{\bf d},{\bf b}}$ packets, we define the rate $R$ (normalized by the size of $L'$ blocks) of the coded caching scheme as
\begin{equation}
\label{R}
R=\underset{{\bf d},{\bf b}}{\sup}\left\{\sum_{k=0}^{K-1}\frac{S_{k,{\bf d},{\bf b}}}{L'F}\right\}.
\end{equation}
Therefore, the first primary concern for a given $F$-division $(K,M,N,L,L')$ coded caching scheme is to minimize the rate $R$. Furthermore, the parameter $F$ may reflect the complexity of the caching scheme and it is also natural to expect it
to be as small as possible. So it is meaningful to design several classes of schemes with $R$ and $F$ as small as possible for D2D network.

\subsection{Known results and contributions}

By means of MN scheme, Ji et al. in \cite{MGA} proposed a determined $(K,M,N,L,L')$ coded caching scheme listed in Table \ref{tab-known}, which is referred to as JCM scheme, for D2D network.
\begin{table}[http]
  \centering
  \caption{Previously known determined scheme \label{tab-known}}
  \normalsize{
  \begin{tabular}{|c|c|c|c|c|c|}
\hline                                  &  $K$  & $M/N$   & $F$    & $R$ \\
\hline
 JCM scheme \cite{MGA}   &  $K$  &$\frac{M}{N}=\frac{1}{K},\ldots,\frac{K-1}{K}$
               & $\frac{KM}{N}{K\choose KM/N}$       & $\frac{N}{M}-1$\\
  \hline
   \end{tabular}}
\end{table}
They also showed that the rate of this scheme is near to the minimum rate by information theory. In this paper, we pay attention to designing the scheme with $R$ and $F$ as small as possible in D2D network. Under the same assumptions in the placement phase as in \cite{MU} and \cite{MGA}, the following main results are obtained in this paper.
\begin{itemize}
\item Firstly we propose a $L'F\times K$ array, say {\em $(K,L',F,Z,S)$ D2D placement delivery array} ($(K,L',F,Z,S)$DPDA in short), to characterize the the two phases of the $F$-division $(K,M,N,L,L')$ coded caching scheme, where $M/N=Z/F$ and $R=S/(L'F)$. As a result, designing some determined coded caching scheme for D2D network can be transformed into designing an appropriate DPDA. In fact, JCM scheme can be characterized by a special class of DPDA.
\item Secondly the lower bound $R\geq \frac{N}{M}-1$ of a DPDA is derived. This lower bound is exactly the rate of JCM scheme. Furthermore, constructing a $(K,L',F,Z,S)$ DPDA with $R=\frac{N}{M}-1$ can be reduced to constructing a $(K,1,F,Z,S')$ DPDA with the same $R$.
\item Thirdly based on this investigation, we derive the lower bound on $F$ of a DPDA with $R=\frac{N}{M}-1$ and $\frac{Z}{F}=\frac{M}{N}=\frac{1}{K}$, $\frac{2}{K}$, $1-\frac{2}{K}$, $1-\frac{1}{K}$. Consequently we show that the packet number of JCM scheme achieves the lower bound on $F$ when $\frac{M}{N}=\frac{1}{K},\frac{K-1}{K}$, but can not achieve when $\frac{M}{N}=\frac{2}{K},\frac{K-2}{K}$.
\item Finally we construct three classes of DPDA with $R=\frac{N}{M}-1$ which achieve the lower bound on $F$ for the cases $\frac{M}{N}=\frac{2}{K},\frac{K-2}{K}$.
\end{itemize}

The remainder of this paper is organized as follows. In Section \ref{PDAD2D}, the definition of DPDA is introduced and then its connection with the determined coded caching scheme is established. In Section \ref{Minimual rate}, a lower bound on the achieved rate of DPDA is derived. In Section \ref{Minimal F}, a lower bound on $F$ of DPDA ensuring the minimal rate is derived for some special cases. In Section \ref{New cosntructions}, new DPDA constructions
and their performance comparison with JCM scheme are presented. Finally, the conclusion is given in Section \ref{conclusions}.

\section{PLACEMENT DELIVERY ARRAY FOR CACHING IN WIRELESS D2D NETWORKS}
\label{PDAD2D}
In this section, we propose a new concept, i.e. {\em D2D placement delivery array} (DPDA) to characterize both the placement and delivery phases in D2D network. First, the following notations are useful

\begin{itemize}
\item $\langle a\rangle_b$ denotes the least nonnegative residue of $a$ modulo $b$ for any integer $a$ and  positive integer $b$.
\item For any nonnegative integers $s,a$ and $k \in[0,K)$, let $s^{(k)}+a={(s+a)}^{(k)}$ and $*+a=*$.
\end{itemize}

For simplicity, we also consider the identical caching policy which is assumed in the caching scheme in\cite{MU}, i.e., each
user caches the packets with the same indices from all files, that is, for any $i \in[0,F)$ and $l\in[0,L)$, the $i$-th packets of the $l$-th blocks of all files are either all stored by some user $k\in[0,K)$ or all not.  
Inspired by \cite{QMXQ}, we can use an $(LF) \times K$ array $\mathbf{P}=(p_{j,k})$, $0\leq j<LF$, $0\leq k<K$, consisting of ``$*$" and empty to characterize the placement phase for a $(K,M,N,L,L')$ coded caching scheme by the following way:
\begin{eqnarray}
\label{eq-PA}
\mathcal{Z}_k=\{ W^{l,h}_{i}\ |\  p_{j,k}=*, j\in[0,LF), l=\lfloor j/F\rfloor, h=\langle j \rangle_F,i=0,1,\ldots,N-1\}
\end{eqnarray}
where the number of ``$*$"s in each column is $LFM/N$.

\begin{example}
\label{ex-(4,2,4,2,4)PA}
Let $K=N=F=4$, $M=2,L=3$ and $L'=2$. Now let us consider the following $12\times 4$ array
\begin{eqnarray}
\label{L'FKp}
\mathbf{P}=\left(\begin{array}{cccc}
\Box   &     *    &   *    & \Box       \\
*       &   \Box   &   *    &  \Box      \\
\Box    &   *      &   \Box &   *    \\
  *     &   \Box   &   \Box &   *    \\
\Box    &     *    &   *    & \Box       \\
*       &   \Box   &   *    &  \Box      \\
\Box    &   *      &   \Box &   *    \\
  *     &   \Box   &   \Box &   *    \\
\Box    &     *    &   *    & \Box       \\
*       &   \Box   &   *    &  \Box      \\
\Box    &   *      &   \Box &   *    \\
  *     &   \Box   &   \Box &   *
\end{array}
\right),
\end{eqnarray}
where the symbol $``\Box"$ denotes an empty entry. Clearly the number of ``$*$"s in each column is
$LFM/N=3\cdot 4\cdot 2/4=6$. From \eqref{eq-PA}, a placement phase can be realized for a $(4,2,4,3,2)$ coded caching scheme by $\mathbf{P}$ in \eqref{L'FKp} as follows:
\begin{align*}
    \mathcal{Z}_0 &= \{W^{l,1}_{i},W^{l,3}_{i}\}\ |\ i\in[0,3],l\in[0,2]\}  &
    \mathcal{Z}_1 &= \{W^{l,0}_{i},W^{l,2}_{i}\}\ |\ i\in[0,3],l\in[0,2]\}\\
    \mathcal{Z}_2 &= \{W^{l,0}_{i},W^{l,1}_{i}\}\ |\ i\in[0,3],l\in[0,2]\}  &
    \mathcal{Z}_3 &= \{W^{l,2}_{i},W^{l,3}_{i}\}\ |\ i\in[0,3],l\in[0,2]\}
\end{align*}
\end{example}

In order to  make the implementation simpler and more efficient,  the identical caching is assumed for each block of all files along the lines of \cite{MGA}, i.e.,
\begin{enumerate}
\item [C$0$.] Let $p_{j,k}=*$, for any other entry $p_{j',k}$ in column $k$, if $\langle j' \rangle_F=\langle j \rangle_F$, then $p_{j',k}=*$.
\end{enumerate}

From C$0$, we have that the number of $``*"$s in each column of the first $F$ rows is the same. Clearly this number is $Z=MF/N$. So we have
\begin{enumerate}
\item [C$1$.] The symbol $ ``*"$ appears $Z$ times in each column of the first $F$ rows.
\end{enumerate}
For instance, $\mathbf{P}$ in \eqref{L'FKp} satisfies properties C$0$ and C1. 
It is very interesting that an array can also be used to characterize the delivery phase for a $(K,M,N,L,L')$ coded caching scheme when we properly place each required empty entry, whose row index is corresponding to some packet which is needed by the user corresponding to the column index, by some integer in the set $\{s^{(k_s)}| s\in[0,S), k_s\in[0,K)\}$, where $S$ is a positive integer, such that if the obtained array satisfies C$0$, C$1$ and the following properties:
\begin{enumerate}
  \item [C2.] For any $s\in [0,S)$, $s^{(k_s)}$ occurs at least once in the array;
  \item [C3.] If $p_{j,k}=s^{(k_s)},s\in [0,S)$, then $p_{j,k_s}=*$;
  \item [C4.] For any two distinct entries $p_{j_1,k_1}$ and $p_{j_2,k_2}$, if   $p_{j_1,k_1}=p_{j_2,k_2}=s^{(k_s)},s\in [0,S)$,  then
  \begin{enumerate}
     \item [a.] $j_1\ne j_2$, $k_1\ne k_2$, i.e., they lie in distinct rows and distinct columns;
     \item [b.] $p_{j_1,k_2}=p_{j_2,k_1}=*$, i.e., the corresponding $2\times 2$  subarray formed by rows $j_1,j_2$ and columns $k_1,k_2$ must be of the following form

    $$
    \begin{pmatrix}
    s^{(k_s)} & *         \\
    *         & s^{(k_s)}
    \end{pmatrix}
    \qquad   \text{or}  \qquad
    \begin{pmatrix}
    *         & s^{(k_s)}    \\
    s^{(k_s) }& *
    \end{pmatrix}
    $$

  \end{enumerate}
\end{enumerate}

\begin{example}
\label{ex-(4,2,4,2,4)DPDA}
We continue to consider Example \ref{ex-(4,2,4,2,4)PA}. Assume that the request vectors are ${\bf d}=(0,1,2,3)$
and ${\bf b}=(0,1,0,1)$ in the delivery phase. Through properly choosing each required empty entry from the set $\{s^{(k_s)}|s\in [0,8),k_s\in [0,4)\}$, 
$\mathbf{P}$ in \eqref{L'FKp} can be written as
\begin{eqnarray}
\label{LFKpd}
\mathbf{P}=\left(\begin{array}{cccc}
2^{(2)} &     *    &   *     & \Box       \\
*       &   \Box   &   *     &  \Box      \\
3^{(3)} &   *      & 1^{(1)} &   *    \\
  *     &   \Box   &  0^{(0)}&   *    \\
\hline
6^{(2)} &     *    &   *     & 1^{(1)}      \\
*       & 2^{(2)}  &   *     & 0^{(0)}    \\
7^{(3)} &   *      & 5^{(1)} &   *    \\
  *     & 3^{(3)}  & 4^{(0)} &   *    \\
\hline
\Box    &     *    &   *     & 5^{(1)}     \\
*       & 6^{(2)}  &   *     & 4^{(0)}    \\
\Box    &   *      &   \Box  &   *    \\
  *     & 7^{(3)}  &   \Box  &   *
\end{array}
\right).
\end{eqnarray}
It is not difficult to check that above $\mathbf{P}$ satisfies C2, C3, and C4 too. According to \eqref{LFKpd}, the coded multicast messages are as follows:
\begin{align*}
    X_{0,{\bf d},{\bf b}} &=W{_{2}^{0,3}}\oplus W{_{3}^{1,1}}, W{_{2}^{1,3}}\oplus W{_{3}^{2,1}} &
    X_{1,{\bf d},{\bf b}} &=W{_{2}^{0,2}}\oplus W{_{3}^{1,0}}, W{_{2}^{1,2}}\oplus W{_{3}^{2,0}}\\
    X_{2,{\bf d},{\bf b}} &=W{_{0}^{0,0}}\oplus W{_{1}^{1,1}}, W{_{0}^{1,0}}\oplus W{_{1}^{2,1}} &
    X_{3,{\bf d},{\bf b}} &=W{_{0}^{0,2}}\oplus W{_{1}^{1,3}}, W{_{0}^{1,2}}\oplus W{_{1}^{2,3}}
\end{align*}

Then for user $0$, it requests the first and second blocks of file $W_0$, i.e. $\{W{_0^0},W{_0^1}\}$, and it has cached packets $\{1,3\}$ of each block, so it needs packets $\{0,2\}$ of the first and second blocks of file $W_0$, i.e. $\{W{_0^{0,0}},W{_0^{0,2}},W{_0^{1,0}},W{_0^{1,2}}\}$. And user $2$ broadcasts $W{_{0}^{0,0}}\oplus W{_{1}^{1,1}}$ and $W{_{0}^{1,0}}\oplus W{_{1}^{2,1}}$, user $0$ can decode $W{_{0}^{0,0}}$ and $W{_{0}^{1,0}}$ because it has stored $W{_{1}^{1,1}}$ and $W{_{1}^{2,1}}$ in its cache. Moreover, user $3$ broadcasts $W{_{0}^{0,2}}\oplus W{_{1}^{1,3}}$ and $W{_{0}^{1,2}}\oplus W{_{1}^{2,3}}$, user $0$ can decode $W{_{0}^{0,2}}$ and $W{_{0}^{1,2}}$ because it has stored $W{_{1}^{1,3}}$ and $W{_{1}^{2,3}}$. The analysis is similar for other users. In a word, each user can decode its requested blocks correctly with the help of the contents within its own cache. Furthermore, users broadcast collectively $S=8$ messages, each of length one packet, hence, the rate of the caching scheme is $R=S/(L'F)=1$.
\end{example}

From Example \ref{ex-(4,2,4,2,4)DPDA}, we find out that the following $8\times 4$ array
\begin{eqnarray}
\label{L'FKpd}
\mathbf{Q}=\left(\begin{array}{cccc}
2^{(2)} &     *    &   *    & 1^{(1)}\\
*       & 2^{(2)}  &   *    & 0^{(0)}\\
3^{(3)} &   *      & 1^{(1)}&   * \\
  *     & 3^{(3)}  & 0^{(0)}&   * \\
\hline
6^{(2)} &     *    &   *    & 5^{(1)}\\
*       & 6^{(2)}  &   *    & 4^{(0)}\\
7^{(3)} &   *      & 5^{(1)}&   * \\
  *     & 7^{(3)}  & 4^{(0)}&   *
\end{array}
\right),
\end{eqnarray}
whose entries are in the set $\{*\}\cup \{s^{(k_s)}|s\in [0,8),k_s\in [0,4)\}$ also satisfies C0,C1,C2,C3 and C4. Moreover, it can also be used to realize the same placement and delivery phase realized by $\mathbf{P}$ in \eqref{LFKpd}. Obviously, $\mathbf{Q}$ is a $(L'F)\times K$ array, whose number of rows is much smaller than $LF$ of $\mathbf{P}$ when $L$ is far greater than $L'$. Moreover, It's easier to determine the entries in $\mathbf{Q}$ than in $\mathbf{P}$ because of the existence of empty entries in $\mathbf{P}$. So we call $\mathbf{Q}$ in \eqref{L'FKpd} a {\em D2D placement delivery array} for convenient.

\begin{definition}
For  positive integers $K,L',F,Z$ and $S$, a $(L'F)\times K$ array  $\mathbf{P}=(p_{i,j})$, $i\in [0,L'F), j\in[0,K)$ whose entries are in the set $\{*\}\cup \{s^{(k_s)}|s\in [0,S),k_s\in [0,K)\}$, is called a $(K,L',F,Z,S)$ {\em D2D placement delivery array}, or $(K,L',F,Z,S)$ DPDA for short, if it satisfies C0, C1, C2, C3 and C4.
\end{definition}

\begin{theorem}
\label{thDPDA}
A $(K,M,N,L,L')$ coded caching scheme for D2D network can be realized by a $(K,L',F,Z,S)$ DPDA $\mathbf{P}=(p_{i,j})_{(L'F) \times K}$ with $Z/F=M/N$. Precisely, each user can decode its requested blocks correctly for any request vectors ${\bf d}$ and ${\bf b}$ at the rate $R=S/(L'F)$.
\end{theorem}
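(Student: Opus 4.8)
The plan is to give an explicit construction of the scheme from the array and then verify correctness phase by phase, reading the placement off the ``$*$''-entries and the delivery off the integer entries $s^{(k_s)}$. Throughout I would use the fact that, by C0 and C1, the caching pattern is periodic with period $F$: user $k$ caches packet $h$ of \emph{every} block of \emph{every} file precisely when $p_{h,k}=*$ (equivalently, when $p_{i,k}=*$ for any $i$ with $\langle i\rangle_F=h$), and there are exactly $Z$ such indices $h$ in each column. Hence each user stores $Z$ of the $F$ packets of each block, i.e. a fraction $Z/F=M/N$ of each file, so its total storage is $N\cdot(Z/F)=M$ files and the memory constraint is met.

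For the delivery phase I would interpret row $i$ as the pair (block offset $\lfloor i/F\rfloor$, packet index $\langle i\rangle_F$). Given request vectors $\mathbf{d},\mathbf{b}$, for each $s\in[0,S)$ let the transmitter $k_s$ broadcast the single coded packet
$$X_s=\bigoplus_{(i,k):\,p_{i,k}=s^{(k_s)}} W_{d_k}^{\,b_k+\lfloor i/F\rfloor,\ \langle i\rangle_F}.$$
First I would check that $X_s$ can actually be formed by user $k_s$ from its own cache: each summand is packet $\langle i\rangle_F$ of some block of file $d_k$, and C3 guarantees $p_{i,k_s}=*$, so by the periodicity above user $k_s$ has cached exactly this packet. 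Thus every message is computable locally, without any traversal of a central server.

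The crux, and the step I expect to be the main obstacle, is decodability. Fix a user $k$ and one of its missing packets; this corresponds to a non-``$*$'' entry $p_{i,k}=s^{(k_s)}$, and the wanted packet $W_{d_k}^{\,b_k+\lfloor i/F\rfloor,\langle i\rangle_F}$ is exactly user $k$'s summand in $X_s$. I would argue that $k$ can strip away every other summand: by C4a the remaining summands lie in positions $(i',k')$ with $i'\ne i$ and $k'\ne k$, and by C4b applied to the pair $\{(i,k),(i',k')\}$ we get $p_{i',k}=*$, so user $k$ has cached packet $\langle i'\rangle_F$ of all blocks and all files and in particular knows that summand. Cancelling all of them from $X_s$ leaves precisely the desired packet. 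The delicate points are that C4 is only a \emph{pairwise} condition yet must be invoked simultaneously against every interfering summand of $X_s$, and that this whole argument is \emph{independent} of $\mathbf{d},\mathbf{b}$: only the combinatorial positions of the ``$*$''s and of the label $s^{(k_s)}$ enter, while $\mathbf{d},\mathbf{b}$ merely relabel which file and which block each row points to. Since, by definition, every non-``$*$'' entry carries some label $s^{(k_s)}$, every missing packet of every user is delivered and decoded, for arbitrary $\mathbf{d}$ and $\mathbf{b}$.

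Finally I would count the load. By C2 each $s\in[0,S)$ occurs at least once, so exactly $S$ messages are broadcast in total (transmitter $k_s$ sending those $s$ carrying its superscript), each of length one packet. Hence $\sum_{k=0}^{K-1}S_{k,\mathbf{d},\mathbf{b}}=S$ for every $\mathbf{d},\mathbf{b}$, and since this is constant over all requests the supremum in the definition of the rate, normalized by the $L'F$ packets of the requested blocks, gives $R=S/(L'F)$, as claimed.
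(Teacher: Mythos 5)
Your proposal is correct and follows essentially the same route as the paper's proof: the same placement rule read off the ``$*$''-pattern via C0/C1, the same broadcast $X_s$ indexed by the integer labels, feasibility of transmission from C3, cancellation of interference from C4 (which the paper packages as the $g_s\times(g_s+1)$ all-$*$-off-diagonal subarray, exactly your pairwise argument applied to all interferers at once), and the same count of $S$ unit-length messages giving $R=S/(L'F)$. No gaps.
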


\begin{proof}
The coded caching scheme can be implemented according to $\mathbf{P}$ as follows:
\begin{itemize}
\item {\bf Placement Phase:}
Each user's cache can be obtained by \eqref{eq-PA} and properties C$0$, C$1$ directly, i.e.,
\begin{equation}
\label{eqcache}
    \mathcal{Z}_j=\{W{_{i}^{l,h}}|p_{h,j}=*, h \in [0,F), i \in [0,N), l\in [0,L)\}
\end{equation}
Then each user caches $ZLN$ packets, i.e. $ZLN/(LF)=M$ files,  which satisfies the users’ cache constraint.

\item {\bf Delivery Phase:}
Assume that the request vectors are ${\bf d}=(d_0,d_1,\cdots,d_{K-1})$ and ${\bf b}=(b_0,b_1,\cdots, b_{K-1})$. Then at the time slot $s$, $s\in[0,S)$, user $k_s$ broadcasts
\begin{equation}
\label{eqdiliver}
    \underset{p_{i,j}=s^{(k_s)},i\in [0,L'F),j\in[0,K)}{\bigoplus}  W{_{d_j}^{b_j+\lfloor \frac{i}{F} \rfloor,\langle i \rangle_F}}.
\end{equation}
\end{itemize}
We claim that each user can decode its requested blocks correctly for any request vectors ${\bf d}$ and ${\bf b}$ at the rate $R=S/(L'F)$. In fact, for any $s \in [0,S)$, assume that $s^{(k_s)}$ occurs $g_s$ times in $\mathbf{P}$, and define
$$\{(i,j)|p_{i,j}=s^{(k_s)},i\in [0,L'F), j\in [0,K) \}=\{(i_1,j_1),(i_2,j_2),\cdots, (i_{g_s},j_{g_s}) \}.$$
Then we have $p_{i_h,k_s}=*$ for any $h\in [1,g_s]$ by C3 and $p_{i_u,j_v}=*$ for all
$1\leq u\neq v\leq g_s$ by C4. So the $g_s \times (g_s+1)$ subarray formed by rows $i_1,i_2,\cdots,i_{g_s}$ and columns $j_1,j_2,\cdots,j_{g_s},k_s$ is equivalent to the following array
\begin{equation}
\label{eqsubarray}
\bordermatrix{%
&j_1 & j_2 & \cdots & j_{g_s} & k_s \cr
i_1    & s^{(k_s)} & *         & \cdots & *        & *      \cr
i_2    & *         & s^{(k_s)} & \cdots & *        & *      \cr
\vdots & \vdots    & \vdots    & \ddots & \vdots   & \vdots \cr
i_{g_s}& *         & *         & \cdots &s^{(k_s)} &*}
\end{equation}
with respect to row/column permutation. According to \eqref{eqdiliver}, at the time slot $s$, user $k_s$ broadcasts
\begin{equation}
\label{eqsignal}
    \underset{1\leq h \leq g_s}{\bigoplus}  W{_{d_{j_h}}^{b_{j_h}+\lfloor \frac{i_h}{F} \rfloor,\langle i_h \rangle_F}}.
\end{equation}
Note from \eqref{eqsubarray} that in column $k_s$, all entries are $``*"$s, it follows \eqref{eqcache} and C$0$  that user $k_s$ has already all the packets $\{W{_{d_{j_h}}^{b_{j_h}+\lfloor \frac{i_h}{F} \rfloor,\langle i_h \rangle_F}}| h\in[1,g_s]\}$ in its cache. So it can broadcast the signal \eqref{eqsignal}. In column $j_h$, $h\in[1,g_s]$, all entries are $``*"$s except for the $h$-th one, it follows \eqref{eqcache} and C$0$ that user $j_h$ has cached all other packets $\{W{_{d_{j_u}}^{b_{j_u}+\lfloor \frac{i_u}{F} \rfloor,\langle i_u \rangle_F}}|1\leq u \neq h \leq g_s\}$ except $W{_{d_{j_h}}^{b_{j_h}+\lfloor \frac{i_h}{F} \rfloor,\langle i_h \rangle_F}}$. So it can easily decode the desired packet $W{_{d_{j_h}}^{b_{j_h}+\lfloor \frac{i_h}{F} \rfloor,\langle i_h \rangle_F}}$.

Since users broadcast collectively $S$ packets for each possible request vectors ${\bf d}$ and ${\bf b}$, the rate of the scheme is given by $R=\frac{S}{L'F}$.
 \end{proof}

From Theorem \ref{thDPDA}, a $(K,L',F,Z,S)$ DPDA with $Z/F=M/N$ can realize both placement and delivery phase of a $(K,M,N,L,L')$ coded caching scheme for D2D network in a single array. This implies that the problem of finding some coded caching scheme could be translated into designing an appropriate DPDA. In fact, JCM scheme can be realized by a special class of DPDA.

\begin{theorem}
\label{JCMtoDPDA}
Any $(K,M,N,L,L')$ JCM scheme is corresponding to a $(K,L',F,Z,S)$ DPDA, where $F= t{K\choose t}$, $Z=t{K-1\choose t-1}$, $S=L'(t+1){K\choose t+1}$, $t=KM/N$ assumed to be an integer.
\end{theorem}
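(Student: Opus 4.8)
The plan is to make the subfile structure of the JCM scheme explicit and then read it off as a $(L'F)\times K$ array that is verified to be a DPDA. First I would recall the JCM placement with $t=KM/N$: each block is partitioned into $\binom{K}{t}$ subfiles indexed by the $t$-subsets $\mathcal{T}\subseteq[0,K)$, and each subfile is split further into $t$ equal parts indexed by the elements of $\mathcal{T}$, say $W^{l}_{i,\mathcal{T}}=\{W^{l}_{i,\mathcal{T},p}\mid p\in\mathcal{T}\}$; user $k$ caches all parts of $W^{l}_{i,\mathcal{T}}$ exactly when $k\in\mathcal{T}$. Counting then gives $F=t\binom{K}{t}$ packets per block, and each user caches the $t$ parts of each of the $\binom{K-1}{t-1}$ subsets $\mathcal{T}\ni k$, so $Z=t\binom{K-1}{t-1}$ and $Z/F=t/K=M/N$, which already matches the claimed parameters.

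Next I would fix bijections identifying the packet index $\langle i\rangle_{F}\in[0,F)$ with the pairs $(\mathcal{T},p)$, $|\mathcal{T}|=t$, $p\in\mathcal{T}$, and the block offset $\lfloor i/F\rfloor\in[0,L')$ with an index $l'$. Define $\mathbf{P}=(p_{i,j})_{(L'F)\times K}$ by placing a ``$*$'' in cell $(i,k)$, where $i\leftrightarrow(l',\mathcal{T},p)$, precisely when $k\in\mathcal{T}$. Since the cache depends only on $\mathcal{T}$ and not on $l'$, property C0 holds immediately, and the count above yields C1 with $Z=t\binom{K-1}{t-1}$.

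It then remains to fill the empty cells by the JCM delivery rule and to check C2--C4. For an empty cell $(i,k)$ with $i\leftrightarrow(l',\mathcal{T},p)$ and $k\notin\mathcal{T}$ (so user $k$ needs this packet), set $\mathcal{S}=\mathcal{T}\cup\{k\}$, a $(t+1)$-subset, and assign the symbol $s^{(k_s)}$ with transmitter $k_s=p$, where $s\in[0,S)$ is the integer encoding the triple $(\mathcal{S},p,l')$ under a fixed bijection. Property C3 is then automatic, because the cell $(i,p)$ lies in column $p$ with $p\in\mathcal{T}$ and is therefore a ``$*$''. For C2 and C4 I would observe that a fixed symbol $s\leftrightarrow(\mathcal{S},p,l')$ occupies exactly the cells $(l',\mathcal{S}\setminus\{k\},p)$ in columns $k\in\mathcal{S}\setminus\{p\}$, giving $t\ge 1$ occurrences in pairwise distinct rows and columns; and for two distinct recipients $k\ne k'$ the cell $(l',\mathcal{S}\setminus\{k\},p)$ in column $k'$ is a ``$*$'' because $k'\in\mathcal{S}\setminus\{k\}$, which is exactly the $2\times 2$ pattern required by C4. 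Finally, counting the triples $(\mathcal{S},p,l')$ gives $S=L'(t+1)\binom{K}{t+1}$, completing the verification that $\mathbf{P}$ is a $(K,L',F,Z,S)$ DPDA with the stated parameters.

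I expect the main obstacle to be the bookkeeping of the transmitter assignment: one must set $k_s=p$, the part/transmitter index lying inside $\mathcal{T}$, rather than the recipient $k$, so that C3 holds and so that, across all recipients $k\in\mathcal{S}\setminus\{p\}$ of a single transmission, the common symbol occupies precisely the star pattern demanded by C4. Once the identification ``symbol $=$ $(t+1)$-subset $\mathcal{S}$ $+$ transmitter $p$ $+$ block offset $l'$'' is in place, the checks of C2--C4 reduce to the routine combinatorial observations sketched above.
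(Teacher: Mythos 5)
Your proposal is correct and follows essentially the same route as the paper's Appendix~A proof: the same placement array indexed by pairs (subset, part), the same identification of each integer symbol with a $(t+1)$-subset together with a transmitter inside it, and the same verification of C0--C4 plus the parameter counts. Your choice to label the $t$ parts of a subfile by the elements $p\in\mathcal{T}$ (so the transmitter of a cell is simply its part label $p$) is just a cleaner reparametrization of the paper's abstract index $j\in[0,t)$ and its formula \eqref{eqm}, which assigns to row $(T,j)$ and column $k$ precisely the $j$-th smallest element of $T$ as transmitter.
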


The proof of Theorem \ref{JCMtoDPDA} is included in Appendix A. In the next section, we will discuss the minimal rate of DPDA.

\section{Minimal rate of DPDA}
\label{Minimual rate}
As mentioned before, the rate of a coded caching scheme realized by a $(K,L',F,Z,S)$ DPDA is $S/(L'F)$. Clearly it is very meaningful to make the value of $S/(L'F)$ as small as possible.
\begin{theorem}
\label{opdpda}
For any $(K,L',F,Z,S)$ DPDA $\mathbf{P}$, the rate of the coded caching scheme of $\mathbf{P}$ satisfies
\begin{equation*}
 R=\frac{S}{L'F} \geq \frac{F}{Z}-1,
\end{equation*}
the equality holds if and only if
\begin{enumerate}
  \item [C$2'$] for any $s \in [0 , S)$, $s^{(k_s)}$ appears $\frac{KZ}{F}$ times and
  \item [C$5$] each row has exactly $\frac{KZ}{F}$ $``*"$s.
\end{enumerate}

\end{theorem}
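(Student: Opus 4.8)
The plan is to recast the claimed bound $S/(L'F)\ge F/Z-1$ in the equivalent multiplicative form $S\ge L'F(F-Z)/Z$ and then prove that inequality by a single double-counting identity played against two applications of Cauchy--Schwarz. First I would fix notation and record the two elementary totals. By C1 together with the periodicity that C0 forces down the columns, every column carries exactly $L'Z$ stars; hence the array holds $KL'Z$ stars in all and $T=KL'(F-Z)$ integer entries. Writing $a_i$ for the number of stars in row $i$ and $g_s$ for the number of occurrences of $s^{(k_s)}$, these counts read $\sum_i a_i=KL'Z$ over the $L'F$ rows and $\sum_s g_s=T$, while each row $i$ contains $b_i=K-a_i$ integer entries.

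The heart of the argument is a local lower bound on the stars in any row that meets a symbol. If $s^{(k_s)}$ occurs at $(i_1,j_1),\dots,(i_{g_s},j_{g_s})$, then inside row $i_h$ the $g_s-1$ cross positions $p_{i_h,j_u}$ with $u\ne h$ are stars by C4b, and $p_{i_h,k_s}$ is a star by C3; since $k_s\notin\{j_1,\dots,j_{g_s}\}$ these are $g_s$ \emph{distinct} stars, so $a_{i_h}\ge g_s$. Summing $a_{i_h}\ge g_s$ over all occurrences and regrouping by rows gives the key inequality $\sum_i a_i(K-a_i)\ge \sum_s g_s^2$. I would then squeeze this from both sides: on the left, $\sum_i a_i^2\ge(\sum_i a_i)^2/(L'F)$ yields $\sum_i a_i(K-a_i)\le K^2L'Z(F-Z)/F$; on the right, $\sum_s g_s^2\ge(\sum_s g_s)^2/S=T^2/S$. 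Chaining the three relations gives $T^2/S\le K^2L'Z(F-Z)/F$, and substituting $T=KL'(F-Z)$ collapses everything to $S\ge L'F(F-Z)/Z$, i.e. $R\ge F/Z-1$.

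For the equality characterization I would track tightness in each of the three steps. The power-mean bound on the $a_i$ is tight precisely when all $a_i$ are equal, forcing $a_i=KZ/F$, which is exactly C5; the Cauchy--Schwarz bound on the $g_s$ is tight precisely when all $g_s$ are equal, and evaluating at the extremal $S=L'F(F-Z)/Z$ pins this common value to $g_s=KZ/F$, which is exactly C2'. Conversely, assuming C5 and C2' makes both averaging steps equalities, and since then $a_{i_h}=KZ/F=g_s$, the termwise bound $a_{i_h}\ge g_s$ is also an equality throughout, so $R=F/Z-1$. Thus equality holds if and only if C2' and C5 both hold.

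The step I expect to be delicate is the local inequality $a_{i_h}\ge g_s$: it is exactly here that C3 (the broadcaster caches every packet it sends) must be combined with C4 to certify that the broadcaster's star $p_{i_h,k_s}$ is a genuinely new star, distinct from the $g_s-1$ cross-stars $p_{i_h,j_u}$. Everything downstream is a routine coupling of the entry total and the star total through Cauchy--Schwarz, and the only care required there is to verify that the two averaging equalities, read off at the extremal value of $S$, reproduce the stated uniform counts $KZ/F$ appearing in C2' and C5 rather than some unrelated constant.
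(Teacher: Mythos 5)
Your proof is correct and follows essentially the same route as the paper: the same double-counting of ``$*$''s used by the integer entries (your termwise bound $a_{i_h}\ge g_s$ summed over occurrences is exactly the paper's inequality $\sum_s r_s^2\le\sum_i t_i(K-t_i)$, with $a_i=K-t_i$), followed by the same two Cauchy--Schwarz/power-mean steps and the same equality analysis yielding C$2'$ and C$5$. No substantive difference.
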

\begin{proof}
For any $s\in [0,S)$, assuming that $s^{(k_s)}$ appears $r_s$ times, there exists a subarray equivalent to the following $r_s \times  (r_s+1)$ array
\begin{eqnarray}
\label{eqsubarray1}
\left(\begin{array}{ccccc}
s^{(k_s)} & *         & \cdots & *        & *      \\
*         & s^{(k_s)} & \cdots & *        & *      \\
\vdots    & \vdots    & \ddots & \vdots   & \vdots \\
*         & *         & \cdots &s^{(k_s)} &*
\end{array}
\right)
\end{eqnarray}
with respect to row/column permutation by \eqref{eqsubarray}. Obviously there are $r{_s^2}$ $``*"$s in \eqref{eqsubarray1}. This implies that for each integer $s$ there are $r{_s^2}$ $``*"$s used in \eqref{eqsubarray1}. Hence, for all the integers in $\mathbf{P}$, there are totally $\sum_{s=0}^{S-1}r{_s^2}$ $``*"$s used.

On the other hand, for any $i \in [0,L'F)$, assuming that there are $t_i$ integers in row $i$, then the remaining $(K-t_i)$ entries are $``*"$s. So the number of $``*"$s in row $i$, which can be used by the $t_i$ integers, is no more than $(K-t_i)$. Then the $``*"$s in row $i$ are used no more than $t_i(K-t_i)$ times.  Hence, all the $``*"$s in $\mathbf{P}$ are totally used no more than $\sum_{i=0}^{L'F-1}t_i(K-t_i)$ times. That is
\begin{equation*}
\sum_{s=0}^{S-1}r{_s^2} \leq \sum_{i=0}^{L'F-1}t_i(K-t_i).
\end{equation*}
This implies that
\begin{equation}
\label{eqxs}
\sum_{s=0}^{S-1}r{_s^2}+\sum_{i=0}^{L'F-1}t{_i^2} \leq \sum_{i=0}^{L'F-1}t_iK=K^2(L'F-L'Z).
\end{equation}
Moreover
\begin{equation}
\label{eqbds1}
\begin{split}
\sum_{s=0}^{S-1}r{_s^2} &\geq \frac{1}{S}\left(\sum_{s=0}^{S-1}r_s\right)^2=\frac{1}{S}K^2(L'F-L'Z)^2,\\
\sum_{i=0}^{L'F-1}t{_i^2} &\geq \frac{1}{L'F}\left(\sum_{i=0}^{L'F-1}t_i\right)^2=\frac{1}{L'F}K^2(L'F-L'Z)^2,
\end{split}
\end{equation}
and the the above equalities hold if and only if $r_0=r_1=\cdots=r_{S-1}$ and $t_0=t_1=\cdots=t_{L'F-1}$. Submitting \eqref{eqbds1} into \eqref{eqxs}, we have
\begin{equation*}
\frac{L'F-L'Z}{S} \leq \frac{Z}{F},\ \ \ \ \   \hbox{i.e.,}\ \ \ \ \ \frac{1-\frac{Z}{F}}{\frac{Z}{F}} \leq \frac{S}{L'F}.
\end{equation*}
So $R=\frac{S}{L'F} \geq \frac{1-\frac{Z}{F}}{\frac{Z}{F}}=\frac{F}{Z}-1$ and
the equality holds if and only if the number of $``*"$s in each row is a constant integer, denoted by $t$, and each integer in $[0,S)$ occurs the same times, denoted by $g$. Since there are $L'ZK$ ``$*$" entries and $L'(F-Z)K$ integer entries, then $t=L'ZK/(L'F)=KZ/F$ and $g=L'(F-Z)K/S=K(1-\frac{Z}{F})/(\frac{S}{L'F})=K(1-\frac{Z}{F})/(\frac{F}{Z}-1)=KZ/F$.
\end{proof}

From Table \ref{tab-known} and Theorem \ref{opdpda}, it is easy to check that JCM scheme reaches this lower bound. In fact, based on a $(K,1,F,Z,S)$ DPDA $\mathbf{P}$ with $Z/F=M/N$, a $(K,L',F,Z,L'S)$ DPDA $\mathbf{Q}$ can be constructed as follows
\begin{eqnarray}
\label{DPDA1toL'}
\mathbf{Q} =\left(\begin{array}{c}
\mathbf{P}     \\
\mathbf{P}+S\cdot {\bf J}_{F\times K}   \\
\mathbf{P}+2S\cdot {\bf J}_{F\times K}  \\
\vdots         \\
\mathbf{P}+(L'-1)S\cdot {\bf J}_{F\times K}
\end{array}
\right)
\end{eqnarray}
where ${\bf J}_{F\times K}$ is an $F\times K$ array with each entry $1$. So we have
\begin{corollary}
\label{th1toL'}
Let $\mathbf{P}$ be a $(K,1,F,Z,S)$ DPDA, then $\mathbf{Q}$ in \eqref{DPDA1toL'} is a $(K,L',F,Z,L'S)$ DPDA.
\end{corollary}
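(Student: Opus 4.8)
The plan is to verify directly that $\mathbf{Q}$ in \eqref{DPDA1toL'} satisfies the five defining properties C0--C4 with parameters $(K,L',F,Z,L'S)$, deducing each from the corresponding property of $\mathbf{P}$. First I would fix convenient coordinates: write each row index $j\in[0,L'F)$ of $\mathbf{Q}$ as $j=mF+i$ with $m=\lfloor j/F\rfloor\in[0,L')$ and $i=\langle j\rangle_F\in[0,F)$, so that row $j$ of $\mathbf{Q}$ is exactly row $i$ of the $m$-th block $\mathbf{P}+mS\cdot\mathbf{J}_{F\times K}$. Using the conventions $*+a=*$ and $s^{(k)}+a=(s+a)^{(k)}$, this yields the single identity $q_{j,k}=p_{i,k}+mS$, which is the only fact I need about the construction. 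In particular $q_{j,k}=*$ if and only if $p_{i,k}=*$, and $q_{j,k}$ carries the label $(s'+mS)^{(k_{s'})}$ whenever $p_{i,k}={s'}^{(k_{s'})}$.

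The \emph{key observation}, and the step that makes everything else routine, is that the $L'$ blocks use pairwise disjoint label sets: the integers appearing in the $m$-th block lie in $[mS,(m+1)S)$, and as $m$ ranges over $[0,L')$ these intervals partition $[0,L'S)$. Consequently, for any target label $s^{(k_s)}$ with $s\in[0,L'S)$, writing $s=mS+s'$ with $s'\in[0,S)$ uniquely determines the block $m$, and every occurrence of $s^{(k_s)}$ in $\mathbf{Q}$ lies in block $m$ and corresponds precisely to an occurrence of ${s'}^{(k_{s'})}$ in $\mathbf{P}$ (with $k_s=k_{s'}$). This confines the verification of C2, C3 and C4 to a single block. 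C2 then follows because ${s'}^{(k_{s'})}$ appears at least once in $\mathbf{P}$; C3 follows because if $q_{j,k}=s^{(k_s)}$ with $j=mF+i$ then $p_{i,k}={s'}^{(k_{s'})}$, so C3 for $\mathbf{P}$ gives $p_{i,k_s}=*$ and hence $q_{j,k_s}=*+mS=*$; and C4 follows because two equal labels $q_{mF+i_1,k_1}=q_{mF+i_2,k_2}=s^{(k_s)}$ force $p_{i_1,k_1}=p_{i_2,k_2}={s'}^{(k_{s'})}$, whence C4 for $\mathbf{P}$ yields $i_1\ne i_2$, $k_1\ne k_2$ and $p_{i_1,k_2}=p_{i_2,k_1}=*$, which translate directly into the required $2\times2$ pattern in $\mathbf{Q}$.

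It remains to handle C0 and C1, the only properties that could involve cross-block interaction. For C1 the first $F$ rows of $\mathbf{Q}$ are literally $\mathbf{P}$ (the block $m=0$), so the count of $Z$ stars per column is inherited unchanged. For C0, suppose $q_{j,k}=*$ and let $j'$ share the same residue, $\langle j'\rangle_F=\langle j\rangle_F=i$; writing $j=mF+i$ and $j'=m'F+i$, the identity $q_{j,k}=p_{i,k}+mS=*$ forces $p_{i,k}=*$, and then $q_{j',k}=p_{i,k}+m'S=*$ as well, which is exactly C0. I do not expect a genuine obstacle here; the one point that must be stated carefully rather than taken for granted is the disjointness of the label sets across blocks, since it is precisely this fact that prevents a single label from straddling two copies and thereby guarantees that the ``at least once'' condition C2 and the $2\times2$ condition C4 never mix rows from different blocks. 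Finally I would record that $\mathbf{Q}$ is $(L'F)\times K$ with labels ranging over $[0,L'S)$, so the parameters are $(K,L',F,Z,L'S)$ as claimed.
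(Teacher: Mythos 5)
Your proof is correct and complete. The paper actually gives no explicit proof of Corollary \ref{th1toL'} --- it states the result as an immediate consequence of the construction \eqref{DPDA1toL'} --- and your direct verification of C0--C4, hinging on the identity $q_{mF+i,k}=p_{i,k}+mS$ and the disjointness of the label intervals $[mS,(m+1)S)$ across blocks, is precisely the routine argument the authors leave implicit.
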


From Theorem \ref{opdpda} we know that the minimal rate of a $(K,L',F,Z,S)$ DPDA with $Z/F=M/N$ is $R_{min}=N/M-1$. And it follows Corollary \ref{th1toL'} that a $(K,L',F,Z,L'S)$ DPDA $\mathbf{Q}$ can be easily constructed from a $(K,1,F,Z,S)$ DPDA $\mathbf{P}$ while keeping the rate unchanged.  It implies that if a $(K,1,F,Z,S)$ DPDA $\mathbf{P}$ with $Z/F=M/N$ achieves the minimal rate $R=R_{min}=N/M-1$, then the $(K,L',F,Z,L'S)$ DPDA $\mathbf{Q}$ in \eqref{DPDA1toL'} also achieves the minimal rate. In this sense, we only need to consider the case of $L'=1$ in the remainder of this paper.

\section{Minimal $F$ of DPDA with minimal rate}
\label{Minimal F}
From this section, we will focus on the case of $L'=1$. First the following theorem is useful.

\begin{theorem}
\label{userbroadcast}
Given a $(K,1,F,Z,S)$ DPDA $\mathbf{P}$, if $R=\frac{S}{F}=\frac{F}{Z}-1$, then for any $k\in[0,K)$, the number of integers, whose superscript is $(k)$, is
\begin{equation}
\label{broadcasttimes}
m_k=\frac{F}{K}\left(\frac{F}{Z}-1\right).
\end{equation}
That is, each user broadcasts the same times.
\end{theorem}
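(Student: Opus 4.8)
The plan is to prove the equivalent quantitative statement that the number of integer entries carrying superscript $(k)$ equals $F-Z$ for every $k$. Since $R=\frac{F}{Z}-1$ forces property C$2'$ of Theorem \ref{opdpda}, every symbol occurs exactly $g:=\frac{KZ}{F}$ times, so this is the same as $m_k\,g=F-Z$, i.e. $m_k=\frac{F-Z}{g}=\frac{F}{K}\bigl(\frac{F}{Z}-1\bigr)$. The whole argument is a single double count of
\begin{equation*}
U_k:=\#\{(i,j)\ :\ p_{i,j}\text{ is an integer and }k\in C_{s(i,j)}\},
\end{equation*}
where for a symbol $s$ I write $C_s$ for the set consisting of its pivot column $k_s$ together with the columns in which it appears; note $|C_s|=g+1$ because a symbol never occurs in its own pivot column (C$3$) and its appearances lie in distinct columns (C$4$(a)).

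The hard part, which I would settle first, is the rigidity forced by the minimal rate. From Theorem \ref{opdpda} I get C$2'$ and C$5$: each symbol appears exactly $g$ times and each row has exactly $g$ stars; write $B_i$ for the $g$-element set of star-columns of row $i$. Fix an appearance of $s$ at $(i,j)$. By C$3$ the pivot $k_s$ is starred in row $i$, and by C$4$(b) every other appearance-column of $s$ is starred in row $i$ as well; these are precisely the $g$ columns of $C_s\setminus\{j\}$, and as $|B_i|=g$ they must exhaust $B_i$. This yields the key correspondence
\begin{equation*}
C_s=B_i\cup\{j\}\ \text{ for every appearance }(i,j)\text{ of }s,\qquad\text{hence}\qquad k\in C_{s(i,j)}\iff k\in B_i\ \text{or}\ k=j .
\end{equation*}
I expect this to be the crux: the naive row- and column-sums only recover the aggregate $\sum_k m_k=S$, and it is exactly this rigidity that converts the symbol-side statistic $m_k$ into the column-side statistic ``number of stars in column $k$''.

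With the correspondence available, both evaluations of $U_k$ are routine. Grouping by symbol gives $U_k=g\cdot\#\{s:k\in C_s\}$, and $\#\{s:k\in C_s\}$ splits (disjointly, since a pivot column is never an appearance column) into the $m_k$ symbols with pivot $k$ and the symbols appearing in column $k$; the latter count equals the number of integer entries in column $k$, which is $F-Z$ because C$1$ with $L'=1$ forces exactly $Z$ stars per column and C$4$(a) makes these $F-Z$ integers distinct symbols. Thus $U_k=g(m_k+F-Z)$. Grouping instead by position and using the correspondence, $U_k$ counts the integer entries $(i,j)$ with $k\in B_i$ (there are $Z$ rows in which $k$ is starred, each carrying $K-g$ integers, giving $Z(K-g)$) together with the integer entries lying in column $k$ itself ($F-Z$), so $U_k=Z(K-g)+(F-Z)$.

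Equating the two expressions gives $g(m_k+F-Z)=Z(K-g)+(F-Z)$. A short simplification using the identity $KZ=gF$ (immediate from $g=KZ/F$) collapses the right-hand side and leaves $g\,m_k=F-Z$, hence $m_k=\frac{F}{K}\bigl(\frac{F}{Z}-1\bigr)$, a value independent of $k$; therefore each user broadcasts the same number of times. The only bookkeeping I would keep an eye on throughout is that ``$k=$ pivot'' and ``$k=$ appearance column'' are mutually exclusive, and that the row-regularity $|B_i|=g$ (C$5$), the occurrence count $g$ (C$2'$), and the column-regularity of $Z$ stars (C$1$) are all invoked simultaneously.
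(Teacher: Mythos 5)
Your proof is correct and is essentially the paper's argument in different clothing: both are the same double count of incidences between integer occurrences and the columns they involve (your $U_k$ differs from the paper's ``number of times the stars in column $k$ are used'' only by the additive term $F-Z$ from including the entry's own column), and your key correspondence $C_s=B_i\cup\{j\}$ is exactly the rigidity the paper extracts from C$2'$ and C$5$ when it asserts that each ``$*$'' is used by all $K-t$ integers in its row. The resulting equation $g(m_k+F-Z)=Z(K-g)+(F-Z)$ is algebraically identical to the paper's $(K-t)Z=tm_k+(F-Z)(t-1)$.
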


\begin{proof}
Since $R=\frac{F}{Z}-1$, from Theorem \ref{opdpda} we have that each integer appears $t=\frac{KZ}{F}$ times, and each row has exactly $t$ $``*"$s. This implies that there are $(K-t)$ integers in each row and each $``*"$ is used by $(K-t)$ integers. So each $``*"$ in $\mathbf{P}$ is used $(K-t)$ times. Clearly the number of $``*"$s in each column is $Z$. So the $``*"$s in each column are totally used $(K-t)Z$ times.

On the other hand, for any $k\in[0,K)$, assume that the number of integers, whose superscript is $(k)$, is $m_k$. Then the $``*"$s in column $k$ are used $tm_k$ times. Moreover, there are $(F-Z)$ integers in column $k$, the $``*"$s in column $k$ are used $(F-Z)(t-1)$ times again by the above discussion. So the $``*"$s in column $k$ are totally used $tm_k+(F-Z)(t-1)$ times. So we have
\begin{equation*}
(K-t)Z=tm_k+(F-Z)(t-1).
\end{equation*}
Then we have
\begin{equation*}
m_k=\frac{F}{K}(\frac{F}{Z}-1).
\end{equation*}
\end{proof}

From Theorem \ref{userbroadcast}, we know that $m_k$ is an constant independent of $k$. In the remainder of this section, we will give the lower bound of $F$ in a $(K,1,F,Z,S)$ DPDA ensuring the minimal rate $R=\frac{F}{Z}-1$ in four cases: $\frac{Z}{F}=\frac{1}{K}, \frac{2}{K}, \frac{K-2}{K}, \frac{K-1}{K}$.
\begin{lemma}
\label{le-opg1}
If $\mathbf{P}$ is a $(K,1,F,Z,S)$ DPDA with $\frac{Z}{F}=\frac{1}{K}$ and $R=\frac{F}{Z}-1$, then $F \geq K$.
\end{lemma}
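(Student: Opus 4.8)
The plan is to use the equality characterization in Theorem~\ref{opdpda} to pin down the fine structure of $\mathbf{P}$, and then read off the bound by an elementary count; the special ratio $\frac{Z}{F}=\frac{1}{K}$ makes this structure as rigid as possible, which is exactly why this is the easiest of the four cases.

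First I would substitute $\frac{Z}{F}=\frac{1}{K}$ into the equality conditions of Theorem~\ref{opdpda}. Because $R=\frac{F}{Z}-1$ is assumed, conditions C$2'$ and C$5$ hold, so each integer $s^{(k_s)}$ occurs exactly $\frac{KZ}{F}=1$ time and each row contains exactly $\frac{KZ}{F}=1$ star. Hence every one of the $F$ rows has a single $``*"$ and $K-1$ integer entries, and no integer value is repeated anywhere in $\mathbf{P}$. (As a by-product, C3 then forces all $K-1$ integers in a given row to carry the same superscript, namely the column index of that row's unique star; this observation is what the harder companion lemmas will have to work for, but here we do not even need it.)

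Next I would count the $``*"$s in two ways. Each of the $F$ rows contributes exactly one star, so $\mathbf{P}$ contains $F$ stars in total; on the other hand, by property C1 each of the $K$ columns contains exactly $Z$ stars, for a total of $KZ$. Equating the two counts gives $F=KZ$. Since $Z$ is a positive integer by the definition of a DPDA, we have $Z\geq 1$, and therefore $F=KZ\geq K$, which is the claim. An equivalent route, more in line with the proofs I anticipate for $\frac{Z}{F}=\frac{2}{K},\frac{K-2}{K},\frac{K-1}{K}$, is to invoke Theorem~\ref{userbroadcast}: each user broadcasts $m_k=\frac{F}{K}\bigl(\frac{F}{Z}-1\bigr)=\frac{F(K-1)}{K}$ times, a quantity that must be a nonnegative integer; as $\gcd(K,K-1)=1$ this forces $K\mid F$, and positivity of $F$ again yields $F\geq K$.

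I do not expect a genuine obstacle here. The only substantive step is correctly extracting \emph{``exactly one star per row and each integer appearing once''} from the equality case of Theorem~\ref{opdpda}; once that is in hand, the inequality follows purely from the integrality of $Z$ (or of $m_k$). The point of isolating this statement is that it serves as the base case of the counting method whose real difficulty appears for the larger ratios treated in the subsequent lemmas.
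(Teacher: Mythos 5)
Your proposal is correct and follows essentially the same route as the paper: both reduce the claim to the integrality of $Z$ together with $F=KZ$ (the paper writes this as $Z=\frac{F}{K}\geq 1$), with the equality conditions of Theorem~\ref{opdpda} playing only a cosmetic role. Your alternative argument via Theorem~\ref{userbroadcast} and $\gcd(K,K-1)=1$ is also valid, but the main line of reasoning coincides with the paper's.
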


\begin{proof}
Since $R=\frac{F}{Z}-1$ is the minimal rate, we have that each row of $\mathbf{P}$ has exactly $t=\frac{KZ}{F}=1$ $``*"$ from Theorem \ref{opdpda}. Since $Z=\frac{F}{K}\geq 1$, $F\geq K$ always holds.
\end{proof}

\begin{example}
From Theorem \ref{opdpda} and Lemma \ref{le-opg1}, it is easy to check that the following $(3,1,3,1,6)$ DPDA $\mathbf{P}$ achieves the minimal rate and the minimal $F$. Actually, it has the same parameters of JCM scheme \cite{MGA} listed in Table \ref{tab-known}.
\begin{eqnarray*}
\mathbf{P} =\left(\begin{array}{ccc}
 *      &  0^{(0)}&  1^{(0)}\\
 3^{(1)}&  *      &  2^{(1)}\\
 4^{(2)}& 5^{(2)} &  *
\end{array}
\right)
\end{eqnarray*}
\end{example}

\begin{lemma}
\label{le-opgK-1}
If $\mathbf{P}$ is a $(K,1,F,Z,S)$ DPDA with $\frac{Z}{F}=\frac{K-1}{K}$ and $R=\frac{F}{Z}-1$, then $F \geq K(K-1)$.
\end{lemma}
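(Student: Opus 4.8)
The plan is to extract strong structural constraints from the minimal-rate hypothesis using the two theorems already established, and then derive the bound from an integrality (divisibility) requirement rather than from any intricate combinatorics.

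First I would invoke Theorem \ref{opdpda}. Because $R=\frac{F}{Z}-1$, condition C$5$ forces each row of $\mathbf{P}$ to contain exactly $t=\frac{KZ}{F}=K-1$ symbols $``*"$, and condition C$2'$ forces each integer $s^{(k_s)}$ to occur exactly $t=K-1$ times. In particular, since each of the $F$ rows has $K-1$ stars among its $K$ entries, every row carries exactly one integer entry. Substituting $\frac{Z}{F}=\frac{K-1}{K}$ also gives $\frac{F}{Z}-1=\frac{1}{K-1}$, which is the rate realized.

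Next I would apply Theorem \ref{userbroadcast}, whose hypotheses hold here, to count the number of time slots in which user $k$ broadcasts:
\begin{equation*}
m_k=\frac{F}{K}\left(\frac{F}{Z}-1\right)=\frac{F}{K}\cdot\frac{1}{K-1}=\frac{F}{K(K-1)}.
\end{equation*}
The key observation is that $m_k$ counts a set of distinct integer symbols, so it is a nonnegative integer; moreover it is strictly positive, since each column has $F-Z=\frac{F}{K}>0$ integer entries (so integers really occur) and Theorem \ref{userbroadcast} makes all $m_k$ equal. Hence $\frac{F}{K(K-1)}$ is a positive integer, which immediately yields $F\geq K(K-1)$.

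Given the two prior theorems, this is essentially a single divisibility step, so I do not anticipate a genuine obstacle in the combinatorics; the only point demanding care is justifying that $m_k$ is a bona fide positive integer rather than a formal fraction. An equally short alternative avoids Theorem \ref{userbroadcast} entirely: integrality of $Z=\frac{(K-1)F}{K}$ forces $K\mid F$, while integrality of $S=RF=\frac{F}{K-1}$ forces $(K-1)\mid F$, and since $\gcd(K,K-1)=1$ these combine to $K(K-1)\mid F$, whence $F\geq K(K-1)$. I would present the $m_k$ version as the main proof, since it parallels the proof of Lemma \ref{le-opg1} and directly reuses the freshly proved Theorem \ref{userbroadcast}.
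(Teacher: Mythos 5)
Your proposal is correct and follows essentially the same route as the paper: the paper likewise applies Theorem \ref{userbroadcast} to get $m_k=\frac{F}{K}\bigl(\frac{F}{Z}-1\bigr)=\frac{F}{K(K-1)}$ and concludes from $m_k\geq 1$ that $F\geq K(K-1)$. Your added remarks on integrality and the alternative divisibility argument are sound but not needed for the stated bound.
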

\begin{proof}
Since $R=\frac{F}{Z}-1$ is the minimal rate, from Theorem \ref{userbroadcast} we have that for any $k\in[0,K)$, the number of integers, whose superscript is $(k)$, is $m_k=\frac{F}{K}(\frac{F}{Z}-1)=\frac{F}{K(K-1)}$. Since $m_k\geq 1$, we have $F \geq K(K-1)$.
\end{proof}

\begin{example}
From Theorem \ref{opdpda} and Lemma \ref{le-opgK-1}, it is easy to check that the following $(3,1,6,4,3)$ DPDA $\mathbf{P}$ achieves the minimal rate and the minimal $F$.
\begin{eqnarray}
\label{(3,1,6,4,3)DPDA}
\mathbf{P}=\left(\begin{array}{ccc}
     *     &     *     & 0^{(0)}  \\
     *     &  0^{(0)}  &     *    \\
  1^{(1)}  &     *     &     *    \\
     *     &     *     &  1^{(1)} \\
     *     &   2^{(2)} &     *    \\
   2^{(2)} &   *       &     *
   \end{array}
\right)
\end{eqnarray}

\end{example}
\begin{lemma}
\label{le-opg2}
If $\mathbf{P}$ is a $(K,1,F,Z,S)$ DPDA with $\frac{Z}{F}=\frac{2}{K}$ and $R=\frac{F}{Z}-1$, then $F \geq \frac{K^2}{4}$.
\end{lemma}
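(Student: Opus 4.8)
The plan is to show that minimal rate makes the array extremely rigid and then to reduce the bound to an equivalent statement about the column weight $Z$. Since $\frac{Z}{F}=\frac{2}{K}$ gives $F=\frac{KZ}{2}$, the assertion $F\geq \frac{K^2}{4}$ is equivalent to $Z\geq \frac{K}{2}$, and I would prove the latter. By Theorem~\ref{opdpda}, the hypothesis $R=\frac{F}{Z}-1$ forces $t=\frac{KZ}{F}=2$, so that every row contains exactly two $``*"$s and every integer $s^{(k_s)}$ occurs exactly twice; by Theorem~\ref{userbroadcast} each user $k$ broadcasts exactly $m_k=\frac{F}{K}\big(\frac{F}{Z}-1\big)$ integers. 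I would record for each row $i$ its unordered pair of star-columns $P_i$, a $2$-subset of $[0,K)$. The crucial structural fact, read off from C3 and C4, is that if an integer with superscript $(k)$ sits in cells $(i_1,j_1)$ and $(i_2,j_2)$, then the antidiagonal pattern of C4b together with $p_{i_1,k}=p_{i_2,k}=*$ from C3 pins down $P_{i_1}=\{k,j_2\}$ and $P_{i_2}=\{k,j_1\}$; in particular both rows lie among the star-rows of column $k$ and $P_{i_1}\cap P_{i_2}=\{k\}$.

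Next I would fix a user $k$ and carry out a double count inside the set $R_k=\{i : k\in P_i\}$ of rows starred in column $k$, which has $|R_k|=Z$ by C1. Let $C_k=\{c\neq k : \{k,c\}=P_i \text{ for some } i\}$ be the set of \emph{partners} of $k$. The structural fact shows that every occurrence of a superscript-$(k)$ integer lies in a row of $R_k$, and that its column belongs to $C_k$: indeed an occurrence at $(i,c)$ in a row with $P_i=\{k,c_i\}$ forces the partner row to have star-set $\{k,c\}$, whence $c\in C_k$ and $c\neq c_i$. Consequently, in any row $i\in R_k$ the superscript-$(k)$ integers occupy distinct columns of $C_k\setminus\{c_i\}$, so there are at most $|C_k|-1$ of them. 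Summing over $R_k$ and using that the total number of superscript-$(k)$ cells equals $2m_k=\frac{Z(K-2)}{2}$, I obtain $\frac{Z(K-2)}{2}\leq Z(|C_k|-1)$, hence $|C_k|\geq \frac{K}{2}$. Finally $Z=|R_k|=\sum_{c\in C_k}|\{i:P_i=\{k,c\}\}|\geq |C_k|\geq \frac{K}{2}$, which gives $F=\frac{KZ}{2}\geq \frac{K^2}{4}$.

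The main obstacle is the structural step: correctly extracting from C3 and C4 that the two star-columns of each row containing a given integer are completely determined, so that the partner $c$ of an occurrence always lies in $C_k$. Once this is in hand, identifying $C_k$ as the right quantity to bound is the key idea, and the inequality $|C_k|\geq \frac{K}{2}$ follows from a one-line averaging over the rows of $R_k$; the passage from $|C_k|$ to $Z$ is immediate because every partner is witnessed by at least one row. A minor point to handle cleanly is that for odd $K$ the value $\frac{K^2}{4}$ is non-integral, so the conclusion is the stated real inequality rather than an integer floor.
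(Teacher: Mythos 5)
Your proof is correct, but it follows a genuinely different route from the paper's. The paper normalizes the first two rows of $\mathbf{P}$, names the integers $a_0,\dots,a_{K-4},b_0,\dots,b_{K-4}$ appearing there, and tracks where their second occurrences can fall in the first three columns, introducing the counts $h_1,\dots,h_5$ and combining the resulting linear inequalities with the column-weight constraint to squeeze out $F\ge K^2/4$. You instead argue globally: from C$5$ each row $i$ has a well-defined $2$-set $P_i$ of star-columns, and your structural observation (that C$3$ and C$4$b pin down $P_{i_1}=\{k,j_2\}$, $P_{i_2}=\{k,j_1\}$ for the two occurrences of a superscript-$(k)$ integer) confines all superscript-$(k)$ occurrences to the $Z$ rows of $R_k$ and to columns in the partner set $C_k$, with at most $|C_k|-1$ per row. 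Counting the $2m_k=\tfrac{Z(K-2)}{2}$ such cells (via Theorem~\ref{userbroadcast}, which the paper's proof of Lemma~\ref{le-opg2} does not invoke) gives $|C_k|\ge K/2$, and $Z\ge|C_k|$ finishes it. I checked the arithmetic ($m_k=\tfrac{Z(K-2)}{4}$ when $F=KZ/2$) and the structural step; both are sound, including the degenerate small-$K$ cases. Your argument is more symmetric and avoids the paper's case analysis entirely; it also isolates the combinatorial object $C_k$ (the ``pairing graph'' on columns induced by the rows' star-patterns), which makes the extremal configuration transparent — equality forces the $P_i$ to realize a perfect-matching-like structure on $[0,K)$, exactly as in Construction~\ref{construction1}. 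The paper's proof, by contrast, is self-contained at the level of Theorem~\ref{opdpda} alone and stays entirely within the first two rows.
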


\begin{proof}
Since $R=\frac{F}{Z}-1$, C$2'$ and C$5$ hold by Theorem \ref{opdpda}, i.e., each integer appears $t=\frac{KZ}{F}=2$ times, and each row has exactly two $``*"$s. After exchange of two rows/columns with appropriate integers' superscript adjustment, we can always assume that the first two rows of $\mathbf{P}$ is as follows, where the superscript of all integers are omitted for simplicity.
\begin{eqnarray*}
\begin{array}{c@{\hspace{-5pt}}l}
 \left(\begin{array}{ccc|cccc}
 *      &  0  &  *    &   a_0 &  a_1 & \cdots & a_{K-4}  \\
 *      &  *  &  0    &   b_0 &  b_1 & \cdots & b_{K-4} \\\hline
        &     &       &       &      &        &         \\
        &     &       &       &      &        &        \\
\multicolumn{3}{c|}%
          {\raisebox{2ex}[0pt]{\LARGE $\mathbf{A_1}$}}  &
\multicolumn{4}{c}%
          {\raisebox{2ex}[0pt]{\LARGE $\mathbf{A_2}$}}
 \end{array}\right)
\end{array}
\end{eqnarray*}
By C$4$, we have that (i) $a_0,\cdots,a_{K-4},b_0,\cdots,b_{K-4}$ are different integers in $[1,S)$; (ii) $a_i$ and $b_i$ can't appear in $\mathbf{A_2}$ for any $i \in [0,K-4]$. So $a_i$ and $b_i$ appears exactly once in $\mathbf{A_1}$ by C$2'$. Moreover, from C$4$ we have that each $a_i$ has to appear in column zero or two of $\mathbf{A_1}$, and each $b_i$ has to appear in column zero or one of $\mathbf{A_1}$. Now let us consider the entries in $\{a_0,\cdots,a_{K-4}\}$ first.
\begin{itemize}
  \item If $a_i=p_{j,0}$ for some $i\in[0,K-4]$ and $j\in[2,F)$, it is easy to check that $$p_{j,2}=*\ \ \ \ \hbox{and} \ \ p_{j,1}\in [0,S)\setminus\{0,a_0,\cdots,a_{K-4},b_0,\cdots,b_{K-4}\}$$
      hold by C$3$, C$4$ and C$5$. Assume the number of such integers $j$ is $h_1$.
  \item If $a_i=p_{j,2}$ for some $i\in[0,K-4]$ and $j\in[2,F)$, then $p_{j,0}=*$ by C$3$. And we claim that $$p_{j,1}\in[0,S)\setminus\{0,a_0,\cdots,a_{K-4}\}$$ by C$4$ and C$5$.
      There are two cases: (i) $p_{j,1}=b_{i'}$ for some $i'\in[0,K-4]$, then $i'=i$. Otherwise if $i'\neq i$, there are at least three $``*"$s in row $j$ by C$4$, which contradicts with C$5$. Assume the number of such integers $j$ is $h_2$. (ii) $p_{j,1}\notin \{b_0,\cdots,b_{K-4}\}$, assume the number of such integers $j$ is $h_3$.
  \end{itemize}
  Similar to the above discussion, we can also obtain the following results.
  \begin{itemize}
  \item If $b_i=p_{j,0}$ for some $i\in[0,K-4]$ and $j\in[2,F)$, it is easy to check that $$p_{j,1}=*\ \ \ \ \hbox{and} \ \ p_{j,2}\in [0,S)\setminus\{0,a_0,\cdots,a_{K-4},b_0,\cdots,b_{K-4}\}$$
      hold by C$3$, C$4$ and C$5$. Assume the number of such integers $j$ is $h_4$.
  \item If $b_i=p_{j,1}$ for some $i\in[0,K-4]$ and $j\in[2,F)$, then $p_{j,0}=*$ by C$3$. And we claim that $$p_{j,2}\in[0,S)\setminus\{0,b_0,\cdots,b_{K-4}\}$$ by C$4$ and C$5$. If $p_{j,1}=a_{i'}$ for some $i'\in[0,K-4]$, then $i'=i$, and the number of such integers $j$ was assumed to be $h_2$ above. If $p_{j,1}\notin \{a_0,\cdots,a_{K-4}\}$, assume the number of such integers $j$ is $h_5$.
\end{itemize}
From above discussion, $\mathbf{P}$ can be written in the following way.
\begin{equation*}
\label{2DPDA}
\begin{array}{c@{\hspace{-5pt}}l}
 \begin{array}{l}
    h_1 \left\{\rule{0mm}{2mm}\right. \\
     \\
    h_2 \left\{\rule{0mm}{2mm}\right.\\
    \\
    h_3 \left\{\rule{0mm}{2mm}\right.\\
    \\
    h_4 \left\{\rule{0mm}{2mm}\right.\\
    \\
    h_5 \left\{\rule{0mm}{2mm}\right.\\
   \end{array}
 &\left(\begin{array}{ccc|cccc}
 *      &  0  &  *    &   a_0 &  a_1 & \cdots & a_{K-4}  \\
 *      &  *  &  0    &   b_0 &  b_1 & \cdots & b_{K-4} \\\hline
 a_i    &\Box &   *   &       &      &        &        \\
        &     &       &       &      &        &         \\
 *      & b_i &  a_i  &       &      &        &        \\
        &     &       &       &      &        &         \\
 *      &\Box &  a_i  &       &      &        &         \\
        &     &       &\multicolumn{4}{c}%
          {\raisebox{2ex}[0pt]{\LARGE $\mathbf{A_2}$}}     \\
 b_i    &  *  &\Box   &       &      &        &        \\
        &     &       &       &      &        &         \\
*       &b_i  &\Box   &       &      &        &        \\
        &     &       &       &      &        &     \\
        &     &       &       &      &        &
\end{array}\right)
\end{array}
\end{equation*}
Since both size of $\{a_0,\cdots,a_{K-4}\}$ and $\{b_0,\cdots,b_{K-4}\}$ are $K-3$, we have
\begin{equation}
\label{eqs1}
h_1+h_2+h_3=K-3,\ \ \ \ \ \ \ \
h_2+h_4+h_5=K-3.
\end{equation}
Since the number of $``*"$s in each column is $Z=2F/K$, we have
\begin{equation}
\label{eqs3}
2+h_2+h_3+h_5 \leq \frac{2F}{K},\ \ \ 1+h_4 \leq \frac{2F}{K}\ \ \ \ 1+h_1 \leq \frac{2F}{K}.
\end{equation}
From \eqref{eqs1}, we have
\begin{equation}
\label{eqs6}
h_2=K-3-h_1-h_3.
\end{equation}
Put \eqref{eqs6} into the first inequality of \eqref{eqs3}, we have
\begin{equation}
\label{eqs7}
h_1 \geq K-1+h_5-\frac{2F}{K} \geq K-1-\frac{2F}{K}.
\end{equation}
On the other hand, from the third inequality of \eqref{eqs3}, we have
\begin{equation}
\label{eqs8}
h_1 \leq \frac{2F}{K}-1
\end{equation}
Combining \eqref{eqs7} and \eqref{eqs8}, we have
\begin{equation*}
K-1-\frac{2F}{K} \leq \frac{2F}{K}-1,
\end{equation*}
i.e.,
\begin{equation*}
F \geq \frac{K^2}{4}.
\end{equation*}
\end{proof}

From Lemma \ref{le-opg2} we know that the lower bound of $F$ can be achieved only if $K$ is even.
\begin{lemma}
\label{opgK-2}
If $\mathbf{P}$ is a $(K,1,F,Z,S)$ DPDA with $K\geq 3$, $\frac{Z}{F}=\frac{K-2}{K}$ and $R=\frac{F}{Z}-1$, then
\begin{equation*}
F \geq \left\{\begin{array}{cc}
K(K-2)  & \hbox{K is odd}\\
\frac{K(K-2)}{2} & \mbox{otherwise.}
\end{array}
\right.
\end{equation*}

\end{lemma}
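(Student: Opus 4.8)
The plan is to obtain the bound directly from the per-user broadcast count established in Theorem~\ref{userbroadcast}, and then split on the parity of $K$. First I would specialize that theorem to the present regime: since $R=\frac{F}{Z}-1$ is minimal and $\frac{Z}{F}=\frac{K-2}{K}$, for every $k\in[0,K)$ the number of integers whose superscript is $(k)$ equals
\[
m_k=\frac{F}{K}\left(\frac{F}{Z}-1\right)=\frac{F}{K}\cdot\frac{2}{K-2}=\frac{2F}{K(K-2)}.
\]
The decisive observation is that $m_k$ is by definition a count, hence a nonnegative integer, and that it is in fact positive: by C2 there are $S=\frac{2F}{K-2}\ge 1$ distinct integers (indeed $F>Z$ whenever $K\ge 3$, since $\frac{K-2}{K}<1$), and as $\sum_k m_k=S$ with all $m_k$ equal we get $m_k=\frac{S}{K}\ge 1$. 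Thus $m_k$ is a positive integer, which forces the divisibility relation $K(K-2)\mid 2F$.

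From this divisibility the two branches of the claim follow by a parity argument. If $K$ is even then $\gcd\!\big(K(K-2),2\big)=2$, so $\frac{K(K-2)}{2}\mid F$ and hence $F\ge\frac{K(K-2)}{2}$. If $K$ is odd then $K(K-2)$ is odd, so $\gcd\!\big(K(K-2),2\big)=1$; since $\frac{K(K-2)}{\gcd(K(K-2),2)}$ is coprime to $\frac{2}{\gcd(K(K-2),2)}$, the relation $K(K-2)\mid 2F$ upgrades to $K(K-2)\mid F$, whence $F\ge K(K-2)$. This reproduces exactly the stated bound, and the argument runs parallel to the proof of Lemma~\ref{le-opgK-1} for the case $\frac{Z}{F}=\frac{K-1}{K}$, the only new feature being the numerator $2$ in $m_k$.

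I expect no genuine obstacle of the structural kind encountered in Lemma~\ref{le-opg2}; the per-user broadcast count already packages all the needed arithmetic, so no case analysis on the internal layout of $\mathbf{P}$ is required. The one point that must be handled with care is the passage from $m_k$ being a positive integer to the divisibility and parity conclusion: the bare inequality $m_k\ge 1$ only yields the weaker even-case bound $F\ge\frac{K(K-2)}{2}$, and it is precisely the integrality of $m_k$ together with the oddness of $K(K-2)$ that sharpens this to $F\ge K(K-2)$ when $K$ is odd.
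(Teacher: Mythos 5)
Your proposal is correct and follows essentially the same route as the paper: both invoke Theorem~\ref{userbroadcast} to get $m_k=\frac{2F}{K(K-2)}$, use the fact that this count is a positive integer, and split on the parity of $K(K-2)$ (the paper phrases the odd case as ``$K(K-2)/2$ is not an integer, so $F>K(K-2)/2$, hence $m_k\ge 2$'', which is the same integrality argument you express via $K(K-2)\mid 2F$). No substantive difference.
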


\begin{proof}
Since $R=\frac{F}{Z}-1$, for any $k\in [0 , K)$, the number of
integers, whose superscript is $(k)$, is
\begin{equation}
\label{2btimes}
m_k=\frac{F}{K}\left(\frac{F}{Z}-1\right)=\frac{2F}{K(K-2)}
\end{equation}
by Theorem \ref{userbroadcast}. Since $m_k \geq 1$, we have
\begin{equation*}
\frac{2F}{K(K-2)}\geq 1.
\end{equation*}
That is,
\begin{equation*}
 F \geq \frac{K(K-2)}{2}.
\end{equation*}
If $K$ is odd, $K(K-2)/2$ is not an integer, then we have
\begin{equation*}
 F > \frac{K(K-2)}{2}.
\end{equation*}
This implies that $m_k=\frac{2F}{K(K-2)} \geq 2$. So we have $F \geq K(K-2)$.
\end{proof}

\section{New constructions}
\label{New cosntructions}
In this section, we will give two classes of DPDA with $Z/F=i/K$ for $i=2$, $K-2$ respectively.

\subsection{$Z/F=2/K$}
Let $q\geq 2$ be an integer. Given an integer $i\in[0,q^2)$ with $i=i_1q+i_0$ for integers $i_0$, $i_1\in[0,q)$, we refer to $i=(i_1,i_0)$ as the q-ary representation of $i$. Now we can define placement sets of $[0,2q)$ as
\begin{equation*}
V_{k_1,k_0}=\{(i_1,i_0)|i_{k_1}=k_0\}, k_0\in[0,q),\ \  k_1=0,\  \ 1.
\end{equation*}

\begin{construction}
\label{construction1}
For any $i=(i_1,i_0)\in[0,q^2)$ and $k=(k_1,k_0)\in[0,2q)$, define the entry in row $i$ and column $k$ in a $q^2 \times 2q$ array $\mathbf{P}=(p_{i,k})$ as follows: if $(i_1,i_0)\in V_{k_1,k_0}$, i.e. $i_{k_1}=k_0$, let $p_{i,k}=*$; if $(i_1,i_0)\notin V_{k_1,k_0}$, i.e. $i_{k_1}\neq k_0$, let $p_{i,k}$ be a non $``*"$ entry as follows:

\begin {equation*}
p_{i,k}=\begin{cases}
*,  & i_{k_1}=k_0, \\
\left((0,i_1),\{i_0,k_0\}\right)^{(q+i_1)},  & k_1=0,i_0\neq k_0,\\
\left((1,i_0),\{i_1,k_0\}\right)^{(i_0)},  & k_1=1,i_1\neq k_0.
\end{cases}
\end{equation*}
\end{construction}

\begin{example}
\label{examq=3}
When $q=3$, the array got from Construction \ref{construction1} is as follows:
\begin{eqnarray*}
\begin{small}
\mathbf{P} =\left(\begin{array}{cccccc}
 *                    & ((0,0),\{0,1\})^{(3)} & ((0,0),\{0,2\})^{(3)} &  *                    & ((1,0),\{0,1\})^{(0)} & ((1,0),\{0,2\})^{(0)}\\
((0,0),\{0,1\})^{(3)} &  *                    & ((0,0),\{1,2\})^{(3)} &  *                    & ((1,1),\{0,1\})^{(1)} & ((1,1),\{0,2\})^{(1)} \\
((0,0),\{0,2\})^{(3)} & ((0,0),\{1,2\})^{(3)} &  *                    &  *                    & ((1,2),\{0,1\})^{(2)} & ((1,2),\{0,2\})^{(2)}\\
  *                   & ((0,1),\{0,1\})^{(4)} & ((0,1),\{0,2\})^{(4)} & ((1,0),\{0,1\})^{(0)} &  *                    & ((1,0),\{1,2\})^{(0)}\\
((0,1),\{0,1\})^{(4)} &  *                    & ((0,1),\{1,2\})^{(4)} & ((1,1),\{0,1\})^{(1)} &  *                    & ((1,1),\{1,2\})^{(1)} \\
((0,1),\{0,2\})^{(4)} & ((0,1),\{1,2\})^{(4)} &  *                    & ((1,2),\{0,1\})^{(2)} &  *                    & ((1,2),\{1,2\})^{(2)} \\
  *                   & ((0,2),\{0,1\})^{(5)} & ((0,2),\{0,2\})^{(5)} & ((1,0),\{0,2\})^{(0)} & ((1,0),\{1,2\})^{(0)} &  *       \\
((0,2),\{0,1\})^{(5)} &  *                    & ((0,2),\{1,2\})^{(5)} & ((1,1),\{0,2\})^{(1)} & ((1,1),\{1,2\})^{(1)} &  *      \\
((0,2),\{0,2\})^{(5)} & ((0,2),\{1,2\})^{(5)} &  *                    & ((1,2),\{0,2\})^{(2)} & ((1,2),\{1,2\})^{(2)} &  *
\end{array}
\right)
\end{small}
\end{eqnarray*}


It's not difficult to check that above $\mathbf{P}$ satisfies C1 with $Z=3$, C2 with $S=18$, C3 and C4. So it is a $(6,1,9,3,18)$ DPDA. Its rate is $R=S/F=18/9=2=F/Z-1$, and $F=9=K^2/4$. So it reaches the lower bounds on $R$ and $F$ of DPDA by Theorem \ref{opdpda} and Lemma \ref{le-opg2}.
\end{example}

\begin{theorem}
\label{th-2/K}
For any integer $q\geq 2$, there exists a $(2q,1,q^2,q,q^3-q^2)$ DPDA.
\end{theorem}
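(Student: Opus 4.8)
The plan is to verify directly that the array $\mathbf{P}$ produced by Construction~\ref{construction1} is a valid $(2q,1,q^2,q,q^3-q^2)$ DPDA, by checking each of the defining properties C0--C5 in turn. Setting $K=2q$, $F=q^2$, $Z=q$, and the claimed symbol alphabet of size $S=q^3-q^2$, the array has the right dimensions $q^2\times 2q$ by construction, so the substance is in the combinatorial properties. First I would handle the placement/``$*$'' structure: for a fixed column $k=(k_1,k_0)$, the entry is $*$ exactly when $i_{k_1}=k_0$, and since $i=(i_1,i_0)$ ranges over $[0,q^2)$, the number of such rows is precisely the number of choices of the other coordinate, namely $q$. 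This gives $Z=q$ stars per column, establishing C1 (C0 is vacuous in the $L'=1$ setting). Dually, in a fixed row $i=(i_1,i_0)$, a column $k=(k_1,k_0)$ gives a star iff $k_0=i_{k_1}$, and there are exactly two such columns (one for $k_1=0$, one for $k_1=1$), which matches $t=KZ/F=2q\cdot q/q^2=2$, the row-star count required by C5.

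Next I would address the symbol alphabet and the counting properties C2 and C2$'$. The non-star entries carry labels of the two forms $\big((0,i_1),\{i_0,k_0\}\big)$ and $\big((1,i_0),\{i_1,k_0\}\big)$; I would argue these labels range over a set of size exactly $S=q^3-q^2=q^2(q-1)$, and that each label occurs exactly $t=2$ times, which is simultaneously C2 (each symbol appears) and C2$'$ (each appears $KZ/F=2$ times). For the count: a label of the first type is determined by a triple consisting of the fixed coordinate $i_1\in[0,q)$ and an unordered pair $\{i_0,k_0\}$ of distinct elements of $[0,q)$; there are $q\binom{q}{2}$ such, and symmetrically for the second type, giving $2q\binom{q}{2}=q^2(q-1)=S$ in total. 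Each label of the form $\big((0,i_1),\{i_0,k_0\}\big)$ arises from exactly the two cells obtained by the two ways of designating which of $i_0,k_0$ is the row coordinate and which is the column coordinate (both in column-block $k_1=0$), so it appears exactly twice; this is the heart of the pairing that C4b demands.

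The main obstacle, as usual for PDA constructions, will be property C4, and in particular the geometric condition C4b that any two equal non-star entries sit in a $2\times2$ subarray whose off-diagonal is filled with stars. I would verify C3 first as a warm-up: if $p_{i,k}=s^{(k_s)}$ then the superscript encodes the broadcasting user, e.g. $k_s=q+i_1$ in the first case, and one checks that column $k_s=(1,i_1)$ in row $i=(i_1,i_0)$ indeed satisfies the star condition $i_1=i_1$, so $p_{i,k_s}=*$. For C4, I would take two cells carrying the same label $s$. By the counting above they lie in the same column-block (same $k_1$); treating the first type $\big((0,i_1),\{i_0,k_0\}\big)$, the two occurrences are the cell in row $(i_1,i_0)$, column $(0,k_0)$ and the cell in row $(i_1,k_0)$, column $(0,i_0)$ (the two elements of the pair swapped between row and column). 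I would then check the two cross cells of the resulting $2\times2$ subarray: row $(i_1,i_0)$ against column $(0,i_0)$ gives a star since $i_0=i_0$, and row $(i_1,k_0)$ against column $(0,k_0)$ gives a star since $k_0=k_0$. This confirms C4a (distinct rows and columns, as $i_0\neq k_0$) and C4b. The analogous check for the second type is symmetric. Since these properties together are exactly the DPDA axioms, and the parameters are as stated, the conclusion follows.
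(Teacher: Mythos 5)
Your proposal is correct and follows essentially the same route as the paper's proof: count the $q$ stars per column, identify each non-star label of the form $\bigl((k_1,\cdot),\{\cdot,\cdot\}\bigr)$ with its exactly two occurrences obtained by swapping the roles of the pair elements between row and column, and verify the star pattern in the resulting subarray (the paper folds your C3 and C4 checks into a single $2\times 3$ subarray that includes the broadcaster's column, but the content is identical). The extra verifications of C2$'$ and C5 are harmless additions not needed for the existence claim itself.
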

\begin{proof}
Let $\mathbf{P}$ be the $q^2 \times 2q$  array got from Construction \ref{construction1}. First, it's easy to work out that there are $Z={q \choose 1}=q$ $``*"$s in each column of $\mathbf{P}$. Second, for any $i=(i_1,i_0)\in[0,q^2)$ and $k=(k_1,k_0)\in[0,2q)$, if $i_{k_1} \neq k_0$, $p_{i,k}$ is a non $``*"$ entry, for which there are two cases:
\begin{itemize}
  \item If $k_1=0$, $i_{k_1}=i_0\neq k_0$, then we have $(i_1,i_0)\in V_{0,i_0}\cap V_{1,i_1}$ and $(i_1,k_0)\in V_{0,k_0}\cap V_{1,i_1}$. So the subarray formed by rows $(i_1,i_0)$, $(i_1,k_0)$ and columns $(0,k_0)$, $(0,i_0)$, $(1,i_1)$ is the following form
      \begin{equation*}
      \bordermatrix{%
                 &(0,k_0) & (0,i_0) &(1,i_1) \cr
      (i_1,i_0)  & \Box   &    *    &   *    \cr
      (i_1,k_0)  &   *    &  \Box   &   *  }
      \end{equation*}
      with respect to row/column permutation. Hence the super combination $\mathcal{T}=\left((0,i_1),\{i_0,k_0\}\right)$ is used to denote the non $``*"$ entry $\Box$, and the signal is broadcasted by user $(1,i_1)=q+i_1$.

  \item If $k_1=1$, $i_{k_1}=i_1\neq k_0$, then we have $(i_1,i_0)\in V_{0,i_0}\cap V_{1,i_1}$ and $(k_0,i_0)\in V_{0,i_0}\cap V_{1,k_0}$. So the subarray formed by rows $(i_1,i_0)$, $(k_0,i_0)$ and columns $(1,k_0)$, $(1,i_1)$, $(0,i_0)$ is the following form
      \begin{equation*}
      \bordermatrix{%
                 &(1,k_0) & (1,i_1) &(0,i_0) \cr
      (i_1,i_0)  & \Box   &    *    &   *    \cr
      (k_0,i_0)  &   *    &  \Box   &   *  }
      \end{equation*}
      with respect to row/column permutation. Hence the super combination $\mathcal{T}=\left((1,i_0),\{i_1,k_0\}\right)$ is used to denote the non $``*"$ entry $\Box$, and the signal is broadcasted by user $(0,i_0)=i_0$.
\end{itemize}

Since $i_0,i_1,k_0$ traverse the set $[0,q)$, there are ${q \choose 2}\times q \times 2=q^3-q^2$ different super combinations in $\mathbf{P}$.
Then $\mathbf{P}$ satisfies C1 with $Z=q$, C2 with $S=q^3-q^2$, C3 and C4. So it is a $(2q,1,q^2,q,q^3-q^2)$ DPDA.
\end{proof}

Clearly the DPDA in Theorem \ref{th-2/K} reaches both the lower bounds on $R=S/F=q-1=F/Z-1$ and $F=q^2=K^2/4$ by Theorem \ref{opdpda} and Lemma \ref{le-opg2} respectively. From Table \ref{tab-known}, JCM scheme has $R=q-1$ and $F=2{K\choose 2}=4q^2-2q$ for the same parameters $K=2q$ and $M/N=1/q$. So our $F$ is about $1/4$ times as small as that of JCM scheme when $q$ is large enough.

\subsection{$Z/F=1-2/K$}

\subsubsection{$K$ is even}

Let $K\geq 4$ be even. If $K=4$, it easy to see that
\begin{eqnarray}
\label{P4}
\mathbf{P}_4 =\left(\begin{array}{cccc}
2^{(2)} &     *    &   *    & 1^{(1)}\\
*       & 2^{(2)}  &   *    & 0^{(0)}\\
3^{(3)} &   *      & 1^{(1)}&   * \\
  *     & 3^{(3)}  & 0^{(0)}&   *
\end{array}
\right)
\end{eqnarray}
is a $(4,1,4,2,4)$ DPDA. Furthermore, if $K=6$,
\begin{eqnarray*}
\mathbf{P}_6 =\left(\begin{array}{cccc|c|c}
2^{(2)} &     *    &   *    & 1^{(1)} &  *     &  *\\
*       & 2^{(2)}  &   *    & 0^{(0)} &  *     &  *\\
3^{(3)} &   *      & 1^{(1)}&   *     &  *     &  *\\
  *     & 3^{(3)}  & 0^{(0)}&   *     &  *     &  *\\   \hline
4^{(4)} &   *      &  *     &   *     &  *     &1^{(1)}\\
*       & 4^{(4)}  &  *     &   *     &  *     &0^{(0)}\\
*       &    *     &4^{(4)} &   *     &  *     &3^{(3)}\\
*       &    *     &  *     &4^{(4)}  &  *     &2^{(2)} \\  \hline
5^{(5)} &   *      &  *     &   *     &1^{(1)} &  *     \\
*       & 5^{(5)}  &  *     &   *     &0^{(0)} &  *     \\
*       &    *     &5^{(5)} &   *     &3^{(3)} &  *\\
*       &    *     &  *     &5^{(5)}  &2^{(2)} &  *
\end{array}
\right)
\end{eqnarray*}
is a $(6,1,12,8,6)$ DPDA. It's easy to check that both $\mathbf{P}_4$ and $\mathbf{P}_6$ reach the lower bounds on $R$ and $F$ by Theorem \ref{opdpda} and Lemma \ref{opgK-2} respectively. In fact, $\mathbf{P}_6$ can be written as follows
\begin{eqnarray*}
\mathbf{P}_6 =\left(\begin{array}{ccc}
\mathbf{P}_4&        *               &           *                 \\
4^{(4)}\mathbf{I}_4 &        *               &  \alpha{_4^ \mathrm{ T }}  \\
5^{(5)}\mathbf{I}_4 &\alpha{_4^ \mathrm{ T }}&              *
\end{array}
\right),
\end{eqnarray*}
where $\mathbf{P}_4$ is defined by \eqref{P4}, $\mathbf{I}_n$ denotes the $n \times n$ array
\begin{eqnarray}
\label{In}
\mathbf{I}_n=\left(\begin{array}{cccc}
1      &  *    &  \cdots &  *\\
*      &  1    &  \cdots &  *\\
\vdots &\vdots &  \ddots &  \vdots\\
*      &  *    &  \cdots & 1
\end{array}
\right),
\end{eqnarray}
 $\alpha_4$ is a $4$-dimensional vector
\begin{equation}
\label{alpha4}
\alpha_4=[1^{(1)},0^{(0)},3^{(3)},2^{(2)}].
\end{equation}
More generally, we propose a recursive construction as follows:

\begin{construction}
\label{construction2}
Let $n\geq 4$ be even, we define a $\frac{n(n+2)}{2} \times (n+2)$ array
\begin{eqnarray}
\label{Peven}
\mathbf{P}_{n+2}=\left(\begin{array}{ccc}
\mathbf{P}_n&        *               &           *                 \\
n^{(n)}\mathbf{I}_n &        *               &  \alpha{_n^ \mathrm{ T }}  \\
(n+1)^{(n+1)}\mathbf{I}_n &\alpha{_n^ \mathrm{ T }}&              *
\end{array}
\right),
\end{eqnarray}
where $\mathbf{P}_4$, $\mathbf{I}_n$ and $\alpha_4$ are defined by\eqref{P4}, \eqref{In} and \eqref{alpha4} respectively, and
\begin{equation}
\label{alphan}
\alpha_n=[\alpha_{n-2},(n-1)^{(n-1)},(n-2)^{(n-2)}].
\end{equation}
\end{construction}

\begin{theorem}
\label{th-K_EVEN_K-2/K}
For any integer $q\geq 2$, there exists a $(2q,1,2q(q-1),2(q-1)^2,2q)$ DPDA.
\end{theorem}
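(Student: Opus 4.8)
The plan is to prove, by induction on the even integer $K\ge 4$, that the array $\mathbf{P}_K$ produced by Construction \ref{construction2} is a $\bigl(K,1,\frac{K(K-2)}{2},\frac{(K-2)^2}{2},K\bigr)$ DPDA; setting $K=2q$ then gives exactly the parameters $\bigl(2q,1,2q(q-1),2(q-1)^2,2q\bigr)$ of the statement, since $\frac{K(K-2)}{2}=2q(q-1)$, $\frac{(K-2)^2}{2}=2(q-1)^2$ and $S=K=2q$. To push the recursion \eqref{Peven} through, I will strengthen the induction hypothesis with two bookkeeping facts about $\mathbf{P}_K$: (a) for every $j\in[0,K)$ the integer $j$ carries the superscript $(j)$, i.e.\ it is broadcast by user $j$; and (b) the invariant that $j^{(j)}$ never lies in column $\bar{j}$, where $\bar{j}=j+1$ for $j$ even and $\bar{j}=j-1$ for $j$ odd is the partner index. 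Note from \eqref{alphan} that $\alpha_K$ carries local position $i$ to the value $\bar{i}$. The base case $K=4$ is $\mathbf{P}_4$ in \eqref{P4}, which one checks by hand to be a $(4,1,4,2,4)$ DPDA satisfying (a) and (b).

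For the inductive step set $n=K$ and pass from $\mathbf{P}_n$ to $\mathbf{P}_{n+2}$ through \eqref{Peven}. Condition C0 is vacuous because $L'=1$. The parameters and C1 follow by a direct count: an old column $c<n$ carries $\frac{(n-2)^2}{2}$ stars from $\mathbf{P}_n$ plus $n-1$ off-diagonal stars from each of the two $\mathbf{I}_n$ blocks, giving $\frac{n^2}{2}$; each new column carries $\frac{n(n-2)}{2}$ stars from the top $*$-block plus $n$ more, again $\frac{n^2}{2}=Z_{n+2}$. For C2/C2$'$, every old integer $j<n$ occurs $n-2$ times inside $\mathbf{P}_n$ (by Theorem \ref{opdpda} applied to $\mathbf{P}_n$) and once in each of the two copies of $\alpha_n$, hence $n$ times, while the new integers $n,n+1$ occur $n$ times each along the diagonals of the $\mathbf{I}_n$ blocks; this equals $t_{n+2}=\frac{(n+2)Z_{n+2}}{F_{n+2}}=n$, so C2$'$ holds. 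For C3 the only new checks are: $n^{(n)}$ (resp.\ $(n+1)^{(n+1)}$) needs a star in column $n$ (resp.\ $n+1$) of its row, which is exactly the appended $*$-block; and an $\alpha_n$-entry $\bar{i}^{(\bar{i})}$ in local row $i$ needs a star in column $\bar{i}$, which holds because the only non-star in columns $[0,n)$ of that row sits on the diagonal in column $i\ne\bar{i}$ (the map $i\mapsto\bar{i}$ has no fixed point). The old entries inherit C3 from $\mathbf{P}_n$, since appending the two $*$-columns cannot disturb columns $<n$.

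The main work, and where invariant (b) is used, is C4. As distinct integers never coincide, it suffices to treat pairs of equal entries, and the new integers $n,n+1$ each occur only on an $\mathbf{I}_n$ diagonal, where any two occupy distinct rows and columns with both off-diagonal cross-entries equal to $*$. For an old integer $j<n$ I locate its $n$ occurrences: $n-2$ inside $\mathbf{P}_n$, one at local row $\bar{j}$ of the second block (column $n+1$), and one at local row $\bar{j}$ of the third block (column $n$). A pair lying wholly inside $\mathbf{P}_n$ is handled by the induction hypothesis, the appended $*$-columns leaving the $2\times 2$ cross-pattern intact. A pair with one entry in $\mathbf{P}_n$, say at row $r$ and column $c$, and one $\alpha_n$-occurrence has its two cross-entries equal to a star of the appended block and to the entry of the relevant $\mathbf{I}_n$ block in column $c$; the latter is the diagonal value only if $c=\bar{j}$, which is excluded precisely by invariant (b) for $\mathbf{P}_n$, so it is a star. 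The pair formed by the two $\alpha_n$-occurrences lies in rows $\bar{j}$ of the second and third blocks and columns $n+1,n$, whose cross-entries fall in the $*$-blocks. Hence C4 holds. Finally (b) is preserved: an old $j<n$ avoids column $\bar{j}$ inside $\mathbf{P}_n$ by hypothesis and appears only in columns $n,n+1\ne\bar{j}$ in the two copies of $\alpha_n$; while $n$ (even) appears only in columns $<n$, never in its partner column $\bar{n}=n+1$, and symmetrically $n+1$ never appears in its partner column $n$; fact (a) is immediate from the labels in \eqref{Peven}.

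This completes the induction, and the parameter specialization above yields the theorem. The only delicate point is C4 for the two ``cross-block'' families of pairs: ensuring that the partner occurrence produced by $\alpha_n$ meets a star rather than a diagonal entry of $\mathbf{I}_n$ is exactly what forces the extra invariant (b) into the induction hypothesis, so identifying and maintaining that invariant is the crux of the argument.
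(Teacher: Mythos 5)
Your proof is correct and follows essentially the same route as the paper's: induction through the recursion \eqref{Peven}, where your invariant (a) and the ``partner column'' invariant (b) are exactly the content of the paper's Proposition \ref{proposition1} (combined with C3), and your observation that the only non-star among the first $n$ columns of an $\alpha_n$-row sits at the partner position is the paper's Proposition \ref{proposition2}. The case analysis for C4, including the identification of the cross-block pairs as the delicate point, matches the paper's Appendix B.
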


The proof of Theorem \ref{th-K_EVEN_K-2/K} is included in Appendix B. Clearly the DPDA in Theorem \ref{th-K_EVEN_K-2/K} reaches both the lower bounds on $R=S/F=1/(q-1)=F/Z-1$ and $F=2q(q-1)=K(K-2)/2$ by Theorem \ref{opdpda} and Lemma \ref{opgK-2} respectively. From Table \ref{tab-known}, JCM scheme has $R=\frac{1}{q-1}$ and $F=2q(q-1)(2q-1)$ for the same parameters $K=2q$ and $M/N=(q-1)/q$. This implies that our $F$ is about $1/(2q-1)$ times as small as that of JCM scheme.

\subsubsection{$K$ is odd}
Let $K\geq 3$ be odd. If $K=3$, it is easy to see that
\begin{eqnarray}
\label{P3}
\mathbf{P}_3 =\left(\begin{array}{ccc}
   *    &  0^{(0)} & 1^{(0)}\\
3^{(1)} &   *      & 2^{(1)}\\
4^{(2)} & 5^{(2)}  &  *
\end{array}
\right)
\end{eqnarray}
is a $(3,1,3,1,6)$ DPDA. Furthermore, if $K=5$,
\begin{eqnarray*}
\mathbf{P}_5 =\left(\begin{array}{ccc|c|c}
   *    &  0^{(0)} & 1^{(0)} &   *    &  *  \\
3^{(1)} &   *      & 2^{(1)} &   *    &  *  \\
4^{(2)} & 5^{(2)}  &  *      &   *    &  *  \\   \hline
6^{(3)} &   *      &  *      &   *    & 2^{(1)}\\
*       & 6^{(3)}  &  *      &   *    & 4^{(2)}\\
*       &    *     &6^{(3)}  &   *    & 0^{(0)}\\ \hline
8^{(4)} &   *      &  *     & 2^{(1)} &  *     \\
*       & 8^{(4)}  &  *     & 4^{(2)} &  *     \\
*       &    *     &8^{(4)} & 0^{(0)} &  *  \\ \hline
7^{(3)} &   *      &  *      &   *    & 5^{(2)}\\
*       & 7^{(3)}  &  *      &   *    & 1^{(0)}\\
*       &    *     &7^{(3)}  &   *    & 3^{(1)}\\ \hline
9^{(4)} &   *      &  *     & 5^{(2)} &  *     \\
*       & 9^{(4)}  &  *     & 1^{(0)} &  *     \\
*       &    *     &9^{(4)} & 3^{(1)} &  *
\end{array}
\right)
\end{eqnarray*}
is a $(5,1,15,9,10)$ DPDA. Both $\mathbf{P}_3$  and $\mathbf{P}_5$ reach the lower bounds on $R$ and $F$ by Theorem \ref{opdpda} and Lemma \ref{opgK-2} respectively. In fact, $\mathbf{P}_5$ can be written as follows
\begin{eqnarray*}
\mathbf{P}_5=\left(\begin{array}{ccc}
\mathbf{P}_3&        *               &           *                 \\
6^{(3)}\mathbf{I}_3 &        *               &  \beta{_3^ \mathrm{ T }}  \\
8^{(4)}\mathbf{I}_3 &\beta{_3^ \mathrm{ T }}&              *    \\
7^{(3)}\mathbf{I}_3 &        *               &  \gamma{_3^ \mathrm{ T }}  \\
9^{(4)}\mathbf{I}_3 &\gamma{_3^ \mathrm{ T }}&              *
\end{array}
\right),
\end{eqnarray*}
where $\beta_3$ and $\gamma_3$ are $3$-dimensional vectors
\begin{equation}
\label{beta3}
\beta_3=[2^{(1)},4^{(2)},0^{(0)}],
\end{equation}
\begin{equation}
\label{gamma3}
\gamma_3=[5^{(2)},1^{(0)},3^{(1)}].
\end{equation}
More generally, we propose a recursive construction as follows:

\begin{construction}
\label{construction3}
Let $n\geq 3$ be odd, we define a $n(n+2) \times (n+2)$ array
\begin{eqnarray}
\label{Podd}
\mathbf{P}_{n+2}=\left(\begin{array}{ccc}
\mathbf{P}_n               &        *               &           *                 \\
(2n)^{(n)}\mathbf{I}_n     &        *               &  \beta{_n^ \mathrm{ T }}  \\
(2n+2)^{(n+1)}\mathbf{I}_n & \beta{_n^ \mathrm{ T }}&              *              \\
(2n+1)^{(n)}\mathbf{I}_n   &        *               &  \gamma{_n^ \mathrm{ T }}  \\
(2n+3)^{(n+1)}\mathbf{I}_n &\gamma{_n^ \mathrm{ T }}&              *              \\
\end{array}
\right),
\end{eqnarray}
where $\mathbf{P}_3$, $\beta_3$ and $\gamma_3$ are defined by \eqref{P3}, \eqref{beta3} and \eqref{gamma3} respectively, and
\begin{equation}
\label{betan}
\beta_n=[\beta_{n-2},(2n-2)^{(n-1)},(2n-4)^{(n-2)}],
\end{equation}
\begin{equation}
\label{gamman}
\gamma_n=[\gamma_{n-2},(2n-1)^{(n-1)},(2n-3)^{(n-2)}].
\end{equation}
\end{construction}

\begin{theorem}
\label{th-K_ODD_K-2/K}
For any integer $q\geq 1$, there exists a $(2q+1, 1,4q^2-1, (2q-1)^2, 4q+2)$ DPDA.
\end{theorem}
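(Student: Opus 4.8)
The statement to prove is that Construction~\ref{construction3} yields, for each odd $n=2q+1$, a DPDA with the claimed parameters $(2q+1,1,4q^2-1,(2q-1)^2,4q+2)$. This is the odd-$K$ analogue of Theorem~\ref{th-K_EVEN_K-2/K}, whose proof is deferred to an appendix, so I would structure the argument as an induction on $q$ mirroring that appendix. The base case $q=1$ is $\mathbf{P}_3$ in \eqref{P3}, which is already verified to be a $(3,1,3,1,6)$ DPDA reaching the bounds. For the inductive step I assume $\mathbf{P}_n$ (with $n=2q-1$) is a valid $(n,1,n(n-2)/\ldots)$ DPDA and must show the block matrix $\mathbf{P}_{n+2}$ in \eqref{Podd} satisfies C0, C1, C2, C3, C4 with the updated parameters. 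Before checking the conditions I would pin down the dimensions and parameter bookkeeping: $\mathbf{P}_{n+2}$ has $n(n+2)$ rows and $n+2$ columns, the integer alphabet grows by the four new labels $2n,2n+1,2n+2,2n+3$ attached to the four new $\mathbf{I}_n$-blocks, and I must confirm that $Z=(n-1)^2$ (each column gets exactly $n-1$ stars) and $S=2n+2$ for the recursion to close on the target $S=4q+2=2n+4$.

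\textbf{Verifying the structural conditions.} The heart of the proof is checking C4 for the augmented array, and I expect this to be the main obstacle. The new entries come in two flavors: the scalar-times-identity blocks $(2n)^{(n)}\mathbf{I}_n$, $(2n+2)^{(n+1)}\mathbf{I}_n$, $(2n+1)^{(n)}\mathbf{I}_n$, $(2n+3)^{(n+1)}\mathbf{I}_n$ occupying the first $n$ columns, and the vectors $\beta_n,\gamma_n$ (and their transposes) occupying the last two columns. For each newly introduced label I must verify that it appears exactly $t=\tfrac{KZ}{F}=n-1$ times (this is C$2'$, needed for the minimal rate via Theorem~\ref{opdpda}, though for DPDA-validity C2 alone suffices) and that every pair of equal entries sits in a $2\times2$ subarray of the required $\binom{s^{(k_s)}\ *}{*\ s^{(k_s)}}$ form. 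The delicate point is the interaction \emph{between} the recursively placed entries of $\mathbf{P}_n$ and the freshly added vector entries $\beta_n,\gamma_n$: I must confirm that the integers recycled inside $\beta_n,\gamma_n$ (which reuse labels already appearing in $\mathbf{P}_n$, as \eqref{beta3}--\eqref{gamma3} show) still form clean $2\times2$ patterns with their earlier occurrences, using that the first $n$ columns of the new rows are filled by an identity pattern whose off-diagonal stars supply exactly the complementary $*$ needed by C4b.

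\textbf{Sequencing and the inductive recursion.} Concretely I would proceed column-block by column-block: first treat collisions among the four identity blocks (these are disjoint label sets $2n,2n+1,2n+2,2n+3$, so each such label's occurrences lie along one diagonal and C4 is immediate from the structure of $\mathbf{I}_n$ in \eqref{In}); then treat the vectors $\beta_n^{\mathrm{T}},\gamma_n^{\mathrm{T}}$ in the last two columns, using the recursive definitions \eqref{betan}--\eqref{gamman} to track where each reused label's two copies land; finally invoke the induction hypothesis for all collisions confined to the top-left $\mathbf{P}_n$ block, noting that the adjoined top-right $*$-columns do not disturb any existing pattern. The C1 count (exactly $Z=(n-1)^2$ stars per column) and the C2 coverage (every label in $[0,S)$ occurs) then follow from a direct tally across the five row-blocks. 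The one genuinely technical lemma I anticipate needing is that the entries of $\beta_n$ and $\gamma_n$ are precisely the labels whose ``missing'' second occurrence is supplied by the identity-block rows, so that appending these columns neither creates a bad pair (violating C4a/C4b) nor leaves a label appearing only once; establishing this compatibility between \eqref{betan}, \eqref{gamman} and the $\mathbf{I}_n$ placement is where the bulk of the careful case analysis will go, and it parallels exactly the obstacle handled in Appendix~B for the even case.
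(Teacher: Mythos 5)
Your plan follows the paper's proof essentially step for step: induction on odd $K$ with base case $\mathbf{P}_3$, a block-by-block verification of C1--C4 on the structure \eqref{Podd}, and the ``genuinely technical lemma'' you anticipate (tracking which columns of $\mathbf{P}_n$ each reused label of $\beta_n,\gamma_n$ can occupy, and which entries of the new identity-block rows are $``*"$s) is precisely the content of the paper's Propositions \ref{proposition3} and \ref{proposition4} in Appendix C, so the approach is sound and matches the source. The only problems are arithmetic: with $n=2q-1$ the array $\mathbf{P}_{n+2}$ has $Z=n^2$ stars in each column (not $(n-1)^2$, and certainly not $n-1$ per column), each integer appears $t=KZ/F=(n+2)n^2/\bigl(n(n+2)\bigr)=n$ times (not $n-1$; each $\mathbf{I}_n$ block contributes $n$ copies of its label), and the integer count is $S=2n+4=4q+2$ (you write $S=2n+2$ at one point). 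These slips would make your C1 and C$2'$ tallies fail if executed literally, but they are bookkeeping errors rather than gaps in the argument.
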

The proof of Theorem \ref{th-K_ODD_K-2/K} is included in Appendix C. Clearly the DPDA in Theorem \ref{th-K_ODD_K-2/K} reaches both the lower bounds on $R=2/(2q-1)=F/Z-1$ and $F=4q^2-1=K(K-2)$ by Theorem \ref{opdpda} and Lemma \ref{le-opg2} respectively. From Table \ref{tab-known}, JCM scheme has $R=2/(2q-1)$ and $F=q(4q^2-1)$ for the same parameters $K=2q+1$ and $M/N=1-2/(2q+1)$. So our $F$ is about $1/q$ times as small as that of JCM scheme.
\section{Conclusions}
\label{conclusions}
In this paper, we focused on coded caching schemes for D2D network with $R$ and $F$ as small as possible . Firstly we proposed a $(K,L',F,Z,S)$ DPDA to characterize the two phases of an $F$-division $(K,M,N,L,L')$ coded caching scheme. As a result, the problem of designing a determined coded caching scheme for D2D network could be transformed into the problem of designing an appropriate DPDA. In fact, JCM scheme can be represented by a special class of DPDA. Secondly the lower bound $R\geq N/M-1$ of a DPDA was derived. This lower bound was exactly the rate of JCM scheme. Furthermore, constructing a $(K,L',F,Z,S)$ DPDA could be reduced to constructing a $(K,1,F,Z,S')$ DPDA with the same $R=N/M-1$. Thirdly based on this investigation, we derived the lower bound on $F$ of a DPDA with $R=N/M-1$ and some $N/M$. Consequently we showed that the $F$ in JCM scheme could not achieve our lower bound when $M/N=2/K,(K-2)/K$.
 Finally we constructed three classes of DPDA which achieved both the lower bounds on $R$ and $F$ when $M/N=2/K,(K-2)/K$.

It is interesting to show whether the $F$ in JCM scheme reaches the lower bound or not for other values of $M/N$. And it would be meaningful to characterize the tradeoff between $R$ and $F$ for other parameters $K$ and $M/N$ and construct the related Pareto-optimal DPDA.

\section*{Appendix A:Proof of Theorem \ref{JCMtoDPDA}}
First the following notations are very useful.

Let $H=\{h_0,h_1,\cdots,h_{n-1}\}$ be an ordered set, $h_0<h_1<\cdots<h_{n-1}$ are integers,
\begin{itemize}
\item Let $H|_m$ denotes the $m$th element of $H$, i.e. $H|_m=h_m$ for any $m\in[0,n)$.
\item Let us arrange all $g(g\in[1,n])$ sized subsets of $H$ in the lexicographic order, then for any $g$ sized subset $G$ of $H$,  define $f_{H,g}(G)$ to be its order minus $1$. Clearly, $f_{H,g}$ is a bijection from $\{G|G\subset H,|G|=g\}$ to $[0,{n\choose g})$. For example, when $H=\{2,5,8,10\}$, all $2$ sized subsets are ordered as $\{2,5\},\{2,8\},\{2,10\},\{5,8\},\{5,10\}$ and $\{8,10\}$. Accordingly, $f_{H,2}(\{2,5\})=0$, $f_{H,2}(\{2,8\})=1$, $f_{H,2}(\{2,10\})=2$, $f_{H,2}(\{5,8\})=3$, $f_{H,2}(\{5,10\})=4$, $f_{H,2}(\{8,10\})=5$.
\end{itemize}

Now let us propose the proof of Theorem \ref{JCMtoDPDA}.
\begin{proof}
For each $t=KM/N \in [1,K)$, the placement and delivery phase in JCM scheme are as follows
\begin{itemize}
\item {\bf Placement Phase:} Each block of each file is split into $F=t{K\choose t}$ packets indexed by $\{(T,j)|T\subset [0,K),|T|=t, j\in [0,t)\}$, such that for any $l\in[0,L)$ and $f\in[0,N)$, the $l$-th block of the file $W_f$ is indicated by
    \begin{equation*}
    W{_f^l}=\{W{_{f}^{l,T,j}|T\subset [0,K),|T|=t, j\in [0,t)}\}.
    \end{equation*}
    And user $k$ caches all the packets such that $k\in T$, i.e.,
    \begin{equation*}
    \mathcal{Z}_k=\{W{_{f}^{l,T,j}| k\in T,f\in [0,N),j\in [0,t)},l\in[0,L)\}.
    \end{equation*}
\item {\bf Delivery Phase:} Assume that the request vectors are ${\bf d}=(d_0,d_1,\cdots,d_{K-1})$ and ${\bf b}=(b_0,b_1,\cdots, b_{K-1})$. For any $(t+1)$-sized subset of $[0,K)$ indicated by $U$, let $U=\{u_0,u_1,\cdots, u_t\}$, user $u_k\in U$ sends
    \begin{equation*}
    \underset{u_m\in U\setminus \{u_k\}}{\bigoplus}  W{_{d_{u_m}}^{b_{u_m}+l,U\setminus \{u_m\},j_{k,m}}},
    \end{equation*}
    where $l\in[0,L')$, $j_{k,m}\in[0,t)$ and for any $m\in[0,t]$, $\{j_{k,m}|k\in [0,t]\setminus\{m\}\}=[0,t)$, which means that the packets $W{_{d_{u_m}}^{b_{u_m}+l, U\setminus \{u_m\},0}},W{_{d_{u_m}}^{b_{u_m}+l,U\setminus \{u_m\},1}},\cdots, W{_{d_{u_m}}^{b_{u_m}+l,U\setminus \{u_m\},t-1}}$, which are needed by user $u_m$, are provided by users in $U\setminus \{u_m\}$, and user $u_k\in U\setminus \{u_m\}$ provides $W{_{d_{u_m}}^{b_{u_m}+l,U\setminus \{u_m\},j_{k,m}}}$.
\end{itemize}
It is worth to noting that the delivery phase in JCM scheme is not unique.
Now we will show that a $(K,M,N,L,1)$ JCM scheme is corresponding to a DPDA constructed as follows:

Let $\mathbf{Q}$ be an $t{K\choose t} \times K$ array. Denote its rows by $\{(T,j)|T\subset \mathcal{K}, |T|=t,j \in [0,t)\}$ and columns by $0,1,\cdots,K-1$, respectively. Then, define the entry in row $(T,j)$ and column $k$ as follows
\begin{equation}
\label{eqJiDPDA}
Q((T,j),k)=
\begin{cases}
\ast  & k \in T, \\
T\cup\{k\}^{(m)}  & k \notin T,
\end{cases}
\end{equation}
where
\begin{equation}
\label{eqm}
m=
\begin{cases}
T\cup \{k\}|_j,  & 0\leq j \leq t-f_{T\cup \{k\},t}(T)-1, \\
T\cup \{k\}|_{j+1},  & t-f_{T\cup \{k\},t}(T) \leq j \leq t-1.
\end{cases}
\end{equation}

Next, we prove that the array $\mathbf{Q}$ defined in \eqref{eqJiDPDA} is a $(K,1,t{K\choose t}, t{K-1\choose t-1},(t+1){K\choose t+1})$ DPDA, i.e. we should consider the following conditions
\begin{itemize}
\item It's easy to see that there are $Z=t{K-1\choose t-1}$ $``*"$s in each column of $\mathbf{Q}$. C$1$ holds.
\item The number of $(t+1)$-sized subsets of $\mathcal{K}$ is ${K\choose t+1}$, and for any such subset, each user in it broadcasts a multicast coded message. So $S=(t+1){K\choose t+1}$, C$2$ holds.
\item If $Q((T,j),k)=T\cup\{k\}^{(m)}$, then  $m\in T$ yielding $Q((T,j),m)=*$ by \eqref{eqJiDPDA}. Otherwise if $m\notin T$, then $m=k$ by \eqref{eqm} and we only need to consider the following two subcases.
    \begin{itemize}
    \item When $0\leq j \leq t-f_{T\cup \{k\},t}(T)-1$, we have $k=T\cup \{k\}|_j$ by \eqref{eqm}. This implies $f_{T\cup \{k\},t}(T)=t-j$, i.e., $j \leq t-f_{T\cup \{k\},t}(T)-1=j-1$, a contradiction to our hypothesis.
    \item When $t-f_{T\cup \{k\},t}(T) \leq j \leq t-1$, we have $k=T\cup \{k\}|_{j+1}$ by \eqref{eqm}. This implies $f_{T\cup \{k\},t}(T)=t-j-1$, i.e., $j+1=t-f_{T\cup \{k\},t}(T) \leq j$, a contradiction to our hypothesis.
    \end{itemize}
    C$3$ holds.
\item For any two distinct entries $Q((T_1,j_1),k_1)$ and $Q((T_2,j_2),k_2)$, if $Q((T_1,j_1),k_1)=Q((T_2,j_2),k_2)=T_1\cup\{k_1\}^{(m)}$, then $k_1\notin T_1$, $k_2\notin T_2$ and $T_1\cup \{k_1\}=T_2\cup \{k_2\}$ by \eqref{eqJiDPDA}. Since $Q((T_1,j_1),k_1)$ and $Q((T_2,j_2),k_2)$ are distinct entries, we have $k_1\neq k_2$. Otherwise if $k_1=k_2$, we have $T_1=T_2$, then $j_1\neq j_2$, without loss of generality, assume that $j_1< j_2$. Since $Q((T_1,j_1),k_1)=Q((T_2,j_2),k_2)=T_1\cup\{k_1\}^{(m)}$, we have $j_1\in [0,t-f_{T\cup \{k\},t}(T)-1]$ and $j_2\in [t-f_{T\cup \{k\},t}(T),t-1]$
    by \eqref{eqm}. As a result, $m=T_1\cup \{k_1\}|_{j_1}=T_1\cup \{k_1\}|_{j_2+1}$, which is a contradiction. Hence, we have $k_1\in T_2$ and $k_2\in T_1$, which implies that $Q((T_1,j_1),k_2)=Q((T_2,j_2),k_1)=*$. C$4$ holds.
\end{itemize}
So the array $\mathbf{Q}$ corresponding to the $(K,M,N,L,1)$ JCM scheme is a $(K,1,F, Z,S)$ DPDA, where $F=t{K\choose t}, Z=t{K-1\choose t-1},S=(t+1){K\choose t+1}$.

More generally, the $(K,M,N,L,L')$ JCM scheme can be represented by the following $(K,L',F,Z,L'S)$ DPDA.
\begin{eqnarray}
\left(\begin{array}{c}
\mathbf{Q}     \\
\mathbf{Q}+S\cdot {\bf J}_{F\times K}   \\
\mathbf{Q}+2S\cdot {\bf J}_{F\times K}  \\
\vdots         \\
\mathbf{Q}+(L'-1)S\cdot {\bf J}_{F\times K}
\end{array}
\right)
\end{eqnarray}
where ${\bf J}_{F\times K}$ is an $F\times K$ array with each entry $1$.
\end{proof}
\begin{example}
\label{examJM}
Consider the $(4,2,4,2,1)$ JCM scheme, yielding $t=2$. Each block is divided into $t{K\choose t}=12$ packets with the following labeling: for $f=0,1,2,3$ and $l=0,1$, denote the $l$-th block of the file $W_f$ by $$W{_f^l}=\{W{_{f}^{l,\{0,1\},j}},W{_{f}^{l,\{0,2\},j}},W{_{f}^{l,\{0,3\},j}},W{_{f}^{l,\{1,2\},j}},W{_{f}^{l,\{1,3\},j}},W{_{f}^{l,\{2,3\},j}}|j=0,1\}$$.The caches are given by
\begin{align*}
    \mathcal{Z}_0 &= \{W{_{f}^{l,\{0,1\},j}},W{_{f}^{l,\{0,2\},j}},W{_{f}^{l,\{0,3\},j}} |\ f\in[0,4),l\in[0,2),j\in[0,2)\}\\
    \mathcal{Z}_1 &=  \{W{_{f}^{l,\{0,1\},j}},W{_{f}^{l,\{1,2\},j}},W{_{f}^{l,\{1,3\},j}} |\ f\in[0,4),l\in[0,2),j\in[0,2)\}\\
    \mathcal{Z}_2 &=  \{W{_{f}^{l,\{0,2\},j}},W{_{f}^{l,\{1,2\},j}},W{_{f}^{l,\{2,3\},j}} |\ f\in[0,4),l\in[0,2),j\in[0,2)\}\\
    \mathcal{Z}_3 &=  \{W{_{f}^{l,\{0,3\},j}},W{_{f}^{l,\{1,3\},j}},W{_{f}^{l,\{2,3\},j}} |\ f\in[0,4),l\in[0,2),j\in[0,2)\}
\end{align*}
Assuming that the request vectors are $\textbf{d}=(0,1,2,3)$ and $\textbf{b}=(0,1,0,1)$. The following multicast coded messages are sent:
\begin{align*}
    X_{0,\textbf{d},\textbf{b}} &=W{_{1}^{1,\{0,2\},0}} \oplus W{_{2}^{0,\{0,1\},0}},\quad  W{_{1}^{1,\{0,3\},0}} \oplus W{_{3}^{1,\{0,1\},0}},\quad W{_{2}^{0,\{0,3\},0}} \oplus W{_{3}^{1,\{0,2\},0}}\\
    X_{1,\textbf{d},\textbf{b}} &=W{_{0}^{0,\{1,2\},0}} \oplus W{_{2}^{0,\{0,1\},1}},\quad W{_{0}^{0,\{1,3\},0}} \oplus W{_{3}^{1,\{0,1\},1}},\quad W{_{2}^{0,\{1,3\},0}} \oplus W{_{3}^{1,\{1,2\},0}}\\
    X_{2,\textbf{d},\textbf{b}} &=W{_{0}^{0,\{1,2\},1}} \oplus W{_{1}^{1,\{0,2\},1}},\quad W{_{0}^{0,\{2,3\},0}} \oplus W{_{3}^{1,\{0,2\},1}},\quad W{_{1}^{1,\{2,3\},0}} \oplus W{_{3}^{1,\{1,2\},1}}\\
    X_{3,\textbf{d},\textbf{b}} &=W{_{0}^{0,\{1,3\},1}} \oplus W{_{1}^{1,\{0,3\},1}},\quad W{_{0}^{0,\{2,3\},1}} \oplus W{_{2}^{0,\{0,3\},1}},\quad W{_{1}^{1,\{2,3\},1}} \oplus W{_{2}^{0,\{1,3\},1}}
\end{align*}

It's easy to check that the JCM scheme described in example \ref{examJM} is corresponding to the following DPDA
\begin{equation}
\label{(4,1,12,6,12)DPDA}
\mathbf{Q}=\bordermatrix{%
            &       0       &        1       &        2        &        3       \cr
(\{0,1\},0) &       *       &        *       & \{0,1,2\}^{(0)} &\{0,1,3\}^{(0)} \cr
(\{0,2\},0) &       *       &\{0,1,2\}^{(0)} &        *        &\{0,2,3\}^{(0)} \cr
(\{0,3\},0) &       *       &\{0,1,3\}^{(0)} & \{0,2,3\}^{(0)} &        *       \cr
(\{1,2\},0) &\{0,1,2\}^{(1)}&        *       &        *        &\{1,2,3\}^{(1)} \cr
(\{1,3\},0) &\{0,1,3\}^{(1)}&        *       & \{1,2,3\}^{(1)} &        *       \cr
(\{2,3\},0) &\{0,2,3\}^{(2)}&\{1,2,3\}^{(2)} &        *        &        *       \cr
(\{0,1\},1) &       *       &        *       & \{0,1,2\}^{(1)} &\{0,1,3\}^{(1)} \cr
(\{0,2\},1) &       *       &\{0,1,2\}^{(2)} &        *        &\{0,2,3\}^{(2)} \cr
(\{0,3\},1) &       *       &\{0,1,3\}^{(3)} & \{0,2,3\}^{(3)} &        *       \cr
(\{1,2\},1) &\{0,1,2\}^{(2)}&        *       &        *        &\{1,2,3\}^{(2)} \cr
(\{1,3\},1) &\{0,1,3\}^{(3)}&        *       & \{1,2,3\}^{(3)} &        *       \cr
(\{2,3\},1) &\{0,2,3\}^{(3)}&\{1,2,3\}^{(3)} &        *        &        *       }
\end{equation}

\end{example}

\section*{Appendix B:Proof of Theorem \ref{th-K_EVEN_K-2/K}}
In order to prove Theorem \ref{th-K_EVEN_K-2/K}, the following statements are very useful.
\begin{proposition}
\label{proposition1}
For any even integer $K\geq 4$, let $\mathbf{P}_{K}$ be recursively got from \eqref{Peven}, then there are $K$ integers in $\mathbf{P}_{K}$, and for any integer $s \in [0,K)$, its superscript is $(s)$ and
\begin{itemize}
\item if $s$ is even, $s^{(s)}$ only appears in columns $[0,K)\setminus \{s,s+1\}$ ;
\item if $s$ is odd, $s^{(s)}$ only appears in columns $[0,K)\setminus \{s-1,s\}$.
\end{itemize}
\end{proposition}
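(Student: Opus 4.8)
The plan is to induct on the even integer $K$ along the recursion \eqref{Peven}. The base case $K=4$ is read off directly from $\mathbf{P}_4$ in \eqref{P4}: the occurring integers are exactly $0,1,2,3$, each carrying superscript equal to itself, and $0^{(0)},1^{(1)}$ occupy columns $\{2,3\}$ while $2^{(2)},3^{(3)}$ occupy columns $\{0,1\}$, matching the even/odd dichotomy. For the inductive step I would assume the statement for $\mathbf{P}_n$ and analyze the three block columns of $\mathbf{P}_{n+2}$ in \eqref{Peven}.

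The engine of the step is a parallel induction, via \eqref{alphan}, pinning down the structure of $\alpha_n$: it is the adjacent-swap list $[1^{(1)},0^{(0)},3^{(3)},2^{(2)},\dots,(n-1)^{(n-1)},(n-2)^{(n-2)}]$. Explicitly, each $s\in[0,n)$ appears in $\alpha_n$ exactly once, with superscript $(s)$, in position $s+1$ if $s$ is even and position $s-1$ if $s$ is odd. The recursion appends $(n-1)^{(n-1)}$ in the even slot $n-2$ and $(n-2)^{(n-2)}$ in the odd slot $n-1$, which is exactly consistent with this description, so it propagates from $\alpha_4$ in \eqref{alpha4}. This lemma is what keeps the column bookkeeping clean.

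Thus $\mathbf{P}_{n+2}$ contains exactly the integers $0,\dots,n+1$, i.e. $K=n+2$ of them, each with superscript equal to itself. The two new integers are then immediate: since $\mathbf{P}_n$ and both copies of $\alpha_n$ use only $0,\dots,n-1$, the integer $n$ (even) occurs solely on the diagonal of $n^{(n)}\mathbf{I}_n$, hence exactly in columns $[0,n)=[0,n+2)\setminus\{n,n+1\}$, and likewise $(n+1)^{(n+1)}$ (odd) occurs exactly in $[0,n)=[0,n+2)\setminus\{n,n+1\}$; neither reaches columns $n$ or $n+1$. For an old integer $s\in[0,n)$, its occurrences split between the $\mathbf{P}_n$ block (inside $[0,n)$) and the two copies of $\alpha_n^{\mathrm{T}}$, which by the layout of \eqref{Peven} sit in column $n+1$ (over the $n^{(n)}\mathbf{I}_n$ rows) and column $n$ (over the $(n+1)^{(n+1)}\mathbf{I}_n$ rows), all other entries of columns $n,n+1$ over the $\mathbf{P}_n$ rows being $``*"$. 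By the $\alpha_n$-lemma, $s$ lies in both copies and so gains precisely the columns $\{n,n+1\}$. Combining with the inductive hypothesis gives, for even $s$, $\big([0,n)\setminus\{s,s+1\}\big)\cup\{n,n+1\}=[0,n+2)\setminus\{s,s+1\}$, and for odd $s$, $\big([0,n)\setminus\{s-1,s\}\big)\cup\{n,n+1\}=[0,n+2)\setminus\{s-1,s\}$, as claimed.

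The only real obstacle is the indexing inside $\alpha_n$: one must check that the single occurrence of each $s$ actually lands in \emph{both} newly appended columns, and that its forbidden pair stays inside $[0,n)$ and never spills into columns $n,n+1$. Both facts reduce to the adjacent-swap description of $\alpha_n$ together with the observation that the largest even index is $n-2$ and the largest odd index is $n-1$, so every forbidden pair $\{s,s+1\}$ or $\{s-1,s\}$ is contained in $[0,n)$.
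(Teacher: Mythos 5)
Your proof is correct and follows essentially the same route as the paper's: induction on even $K$ along the recursion \eqref{Peven}, treating the two new integers $n^{(n)},(n+1)^{(n+1)}$ via the $\mathbf{I}_n$ blocks and the old integers $s\in[0,n)$ via the inductive hypothesis on $\mathbf{P}_n$ together with their single occurrence in each copy of $\alpha_n$. The only difference is that you make the adjacent-swap structure of $\alpha_n$ (and the exact positions) explicit as a separate sub-induction, whereas the paper simply cites \eqref{alpha4} and \eqref{alphan} for the fact that $s^{(s)}$ appears in $\alpha_n$; this extra detail is harmless and would in fact be useful for Proposition \ref{proposition2}.
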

\begin{proof}
we use mathematic induction. First, when $K=4$, it's easy to check that the proposition is true. Assume that the proposition is true when $K=n$. When $K=n+2$, there are two more integers $\{n^{(n)},(n+1)^{(n+1)}\}$ in $\mathbf{P}_{n+2}$ than in $\mathbf{P}_{n}$ by \eqref{Peven} and \eqref{alphan}. So there are $n+2$ integers in $\mathbf{P}_{n+2}$, and for any integer $s \in [0,n+2)$, its superscript is $(s)$, and
\begin{itemize}
\item if $s\in[0,n)$ is even, then $s^{(s)}$ only appears in columns $[0,n)\setminus \{s,s+1\}$ of $\mathbf{P}_{n}$ by hypothesis. In addition, $s^{(s)}$ appears in $\alpha_n$ by \eqref{alpha4} and \eqref{alphan}, so $s^{(s)}$ only appears in columns $[0,n+2)\setminus \{s,s+1\}$ of $\mathbf{P}_{n+2}$ by \eqref{Peven};
\item if $s\in[0,n)$ is odd, then $s^{(s)}$ only appears in columns $[0,n)\setminus \{s-1,s\}$ of $\mathbf{P}_{n}$ by hypothesis. In addition, $s^{(s)}$ appears in $\alpha_n$ by \eqref{alpha4} and \eqref{alphan}, so $s^{(s)}$ only appears in columns $[0,n+2)\setminus \{s-1,s\}$ of $\mathbf{P}_{n+2}$ by \eqref{Peven};
\item if $s\in \{n,n+1\}$, $s^{(s)}$ only appears in $s^{(s)}\mathbf{I}_n$ of $\mathbf{P}_{n+2}$ by \eqref{Peven}, so the even integer $n^{(n)}$ and the odd integer $(n+1)^{(n+1)}$ only appear in columns $[0,n)=[0,n+2)\setminus \{n,n+1\}$ of $\mathbf{P}_{n+2}$.
\end{itemize}
\end{proof}

\begin{proposition}
\label{proposition2}
For any even integer $K\geq 4$, let $\mathbf{P}_{K}$ be recursively got from \eqref{Peven}. If $K=n+2$, $k\in\{n,n+1\}$ and $\mathbf{P}_{n+2}(i,k)=s^{(s)}$, then $s\in[0,n)$ and
\begin{itemize}
\item if $s$ is odd, the entries in row $i$ and columns $[0,n)\setminus \{s-1\}$ of $\mathbf{P}_{n+2}$ are all $``*"$s;
\item if $s$ is even, the entries in row $i$ and columns $[0,n)\setminus \{s+1\}$ of $\mathbf{P}_{n+2}$ are all $``*"$s.
\end{itemize}
\end{proposition}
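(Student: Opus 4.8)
The plan is to trace where an integer entry in the last two columns of $\mathbf{P}_{n+2}$ can originate, and then exploit the near-diagonal shape of the blocks $n^{(n)}\mathbf{I}_n$ and $(n+1)^{(n+1)}\mathbf{I}_n$. First I would read off directly from the block form \eqref{Peven} that in column $k=n$ every entry equals $*$ except in the bottom block row, where the column equals $\alpha_n^{\mathrm{T}}$; symmetrically, in column $k=n+1$ every entry equals $*$ except in the middle block row, where the column is again $\alpha_n^{\mathrm{T}}$. Consequently any integer entry $\mathbf{P}_{n+2}(i,k)=s^{(s)}$ with $k\in\{n,n+1\}$ must be a coordinate of $\alpha_n$, and since by \eqref{alpha4}--\eqref{alphan} the vector $\alpha_n$ contains exactly the integers $0,1,\dots,n-1$, the assertion $s\in[0,n)$ follows at once.

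Next I would pin down the coordinate pattern of $\alpha_n$ by a short induction on \eqref{alphan}. Writing $\alpha_n=(\alpha_n(0),\dots,\alpha_n(n-1))$, the base case $\alpha_4=[1^{(1)},0^{(0)},3^{(3)},2^{(2)}]$ together with the recursive appending of $[(n-1)^{(n-1)},(n-2)^{(n-2)}]$ shows that $\alpha_n(c)=(c+1)^{(c+1)}$ when $c$ is even and $\alpha_n(c)=(c-1)^{(c-1)}$ when $c$ is odd. The form I actually need is the inverse statement: an odd value $s$ occupies the even position $c=s-1$, while an even value $s$ occupies the odd position $c=s+1$.

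I would then combine the two observations. Suppose $\mathbf{P}_{n+2}(i,n+1)=s^{(s)}$ lies in the middle block row as coordinate $c$ of $\alpha_n$; then row $i$ restricted to columns $[0,n)$ is precisely the $c$-th row of $n^{(n)}\mathbf{I}_n$, which by \eqref{In} equals $*$ everywhere except for the single entry $n^{(n)}$ in column $c$. If $s$ is odd then $c=s-1$, so all columns in $[0,n)\setminus\{s-1\}$ equal $*$; if $s$ is even then $c=s+1$, so all columns in $[0,n)\setminus\{s+1\}$ equal $*$. The case $k=n$ is identical once $n^{(n)}\mathbf{I}_n$ is replaced by $(n+1)^{(n+1)}\mathbf{I}_n$ in the bottom block row, which likewise contributes exactly one non-$*$ entry in column $c$ of row $i$. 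This yields both bulleted conclusions.

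The only delicate point I anticipate is the parity bookkeeping in the second step: correctly matching ``odd $s$ to the even position $s-1$'' versus ``even $s$ to the odd position $s+1$'', since a swapped parity or an off-by-one there would misidentify which column carries the lone $\mathbf{I}_n$ entry and break the conclusion. Everything else is a direct reading of the block decomposition \eqref{Peven} together with the diagonal shape of $\mathbf{I}_n$.
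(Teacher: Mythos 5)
Your proposal is correct and follows essentially the same route as the paper: both identify that an integer entry in columns $n$ or $n+1$ must be a coordinate of $\alpha_n^{\mathrm{T}}$ (forcing $s\in[0,n)$), and both locate the single non-$*$ entry of the corresponding row of $n^{(n)}\mathbf{I}_n$ or $(n+1)^{(n+1)}\mathbf{I}_n$ at column $s-1$ (odd $s$) or $s+1$ (even $s$). Your explicit formula $\alpha_n(c)=(c+1)^{(c+1)}$ for even $c$ and $(c-1)^{(c-1)}$ for odd $c$ just makes precise the positional bookkeeping that the paper displays pictorially.
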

\begin{proof}
Since $\mathbf{P}_{n+2}(i,k)=s^{(s)}$ and $k\in\{n,n+1\}$, then from \eqref{Peven} we have $s\in[0,n)$, and
\begin{itemize}
\item if $s$ is odd, the $i$-th row and columns $[0,n)\cup \{k\}$ of $\mathbf{P}_{n+2}$ is as follows
\begin{equation*}
      \bordermatrix{%
                 & 0 & 1 &\cdots & s-2 & s-1        & s    & \cdots & n-1 & k       \cr
              i  & * & * &\cdots & *   &  s'^{(s')} & *    & \cdots & *   & s^{(s)} }
\end{equation*}
\item if $s$ is even, the $i$-th row and columns $[0,n)\cup \{k\}$ of $\mathbf{P}_{n+2}$ is as follows
\begin{equation*}
      \bordermatrix{%
                 & 0 & 1 &\cdots & s   & s+1        & s+2   & \cdots & n-1 & k      \cr
              i  & * & * &\cdots & *   &  s'^{(s')} & *     & \cdots & *   & s^{(s)} }
\end{equation*}
\end{itemize}
where $s'$ is some integer in $\{n,n+1\}$. So the proposition is true.
\end{proof}

Now let us propose the proof of Theorem \ref{th-K_EVEN_K-2/K}.
\begin{proof}
First we use mathematic induction to prove that $\mathbf{P}_K$ recursively got from \eqref{Peven} is a $(K,1,K(K-2)/2,(K-2)^2/2,K)$ DPDA. When $K=4$, $\mathbf{P}_4$ in \eqref{P4} is a $(4,1,4,2,4)$ DPDA. When $K=n$, assume that $\mathbf{P}_{n}$ is a $(n,1,n(n-2)/2,(n-2)^2/2,n)$ DPDA. When $K=n+2$, we should consider the following cases.
\begin{itemize}
\item $\mathbf{P}_{n+2}$ is an $\left(n(n-2)/2+2n\right) \times (n+2)$ array by \eqref{Peven}, so it has $F=n(n+2)/2$ rows and $K=n+2$ columns.
\item For any $k\in[0,n)$, there are $(n-2)^2/2+2(n-1)=n^2/2$ $``*"$s in column $k$ of $\mathbf{P}_{n+2}$ by \eqref{Peven}. For any $k\in \{n,n+1\}$, there are $n(n-2)/2+n=n^2/2$ $``*"$s in column $k$ of $\mathbf{P}_{n+2}$ by \eqref{Peven}. So there are $Z=n^2/2$ $``*"$s in each column of $\mathbf{P}_{n+2}$, i.e., C$1$ holds.
\item There are $S=n+2$ integers in $\mathbf{P}_{n+2}$ and for any integer $s\in[0:n+2)$, its superscript is $(s)$ by Proposition \ref{proposition1}. So C$2$ holds.
\item Let $\mathbf{P}_{n+2}(i,j)=s^{(s)}$,
    \begin{itemize}
    \item if $s,j\in[0,n)$, $\mathbf{P}_{n+2}(i,s)=\mathbf{P}_{n}(i,s)=*$ by \eqref{Peven} and hypothesis;
    \item if $s\in [0,n)$, $j\in\{n,n+1\}$, $\mathbf{P}_{n+2}(i,s)=*$ by Proposition \ref{proposition2};
    \item if $s\in\{n,n+1\}$, it's easy to see that $\mathbf{P}_{n+2}(i,s)=*$ by \eqref{Peven}.
    \end{itemize}
    So C$3$ holds.
\item For any two distinct entries $\mathbf{P}_{n+2}(i_1,j_1)$ and $\mathbf{P}_{n+2}(i_2,j_2)$, assume that $\mathbf{P}_{n+2}(i_1,j_1)=\mathbf{P}_{n+2}(i_2,j_2)=s^{(s)}$.
    \begin{itemize}
    \item When $s\in \{n,n+1\}$, $s^{(s)}$ only appears in $s^{(s)}\mathbf{I}_n$ of $\mathbf{P}_{n+2}$ by \eqref{Peven}. So $\mathbf{P}_{n+2}(i_2,j_1)=\mathbf{P}_{n+2}(i_1,j_2)=*$;
    \item When $s\in [0,n)$, let us consider the values of $j_1$ and $j_2$.
         \begin{itemize}
         \item[$1)$] If $j_1,j_2\in[0,n)$, we have $\mathbf{P}_{n+2}(i_2,j_1)=\mathbf{P}_{n}(i_2,j_1)=*$,$\mathbf{P}_{n+2}(i_1,j_2)=\mathbf{P}_{n}(i_1,j_2)=*$ by \eqref{Peven} and hypothesis;
         \item[$2)$] If $j_1,j_2\in \{n,n+1\}$, we have $\mathbf{P}_{n+2}(i_2,j_1)=\mathbf{P}_{n+2}(i_1,j_2)=*$ by \eqref{Peven};
         \item[$3)$] If one of $j_1,j_2$ is in $[0,n)$, and the other is in $\{n,n+1\}$, without loss of generality, let $j_1<j_2$, i.e., $j_1\in [0,n)$ and $j_2\in \{n,n+1\}$. Then $\mathbf{P}_{n+2}(i_1,j_1)$ is in $\mathbf{P}_{n}$ and $\mathbf{P}_{n+2}(i_2,j_2)$ is in $\alpha{_n^ \mathrm{ T }}$. Hence, we have $\mathbf{P}_{n+2}(i_1,j_2)=*$ by \eqref{Peven}.
             \begin{itemize}
             \item If $s$ is even, $s^{(s)}$ only appears in columns $[0,n)\setminus \{s,s+1\}$ of $\mathbf{P}_{n}$ by Proposition \ref{proposition1}, we have $j_1\neq s+1$. So we have $\mathbf{P}_{n+2}(i_2,j_1)=*$ by Proposition \ref{proposition2};
             \item If $s$ is odd, $s^{(s)}$ appears in columns $[0,n)\setminus \{s-1,s\}$ of $\mathbf{P}_{n}$ by Proposition \ref{proposition1}, we have $j_1\neq s-1$. So we have $\mathbf{P}_{n+2}(i_2,j_1)=*$ by Proposition \ref{proposition2}.
             \end{itemize}

        \end{itemize}

    \end{itemize}
 So C$4$ holds.
\end{itemize}

Hence, $\mathbf{P}_{n+2}$ is a $(n+2,1,n(n+2)/2,n^2/2,n+2)$ DPDA. So $\mathbf{P}_K$ recursively got from \eqref{Peven} is a $(K,1,K(K-2)/2,(K-2)^2/2,K)$ DPDA. Since $K\geq 4$ is even, let $K=2q, q\geq 2$, then $\mathbf{P}_K$ is a $(2q,1,2q(q-1),2(q-1)^2,2q)$ DPDA.
\end{proof}

\section*{Appendix C:Proof of Theorem \ref{th-K_ODD_K-2/K}}
In order to prove Theorem \ref{th-K_ODD_K-2/K}, the following statements are very useful.

\begin{proposition}
\label{proposition3}
For any odd integer $K\geq 3$, let $\mathbf{P}_K$ be recursively got from \eqref{Podd}, then there are $2K$ integers in $\mathbf{P}_K$, and for any integer $s\in[0,2K)$, the superscript is $(\lfloor \frac{s}{2} \rfloor)$, and the following statements hold.
\begin{itemize}
\item When $s=0$ or $5$, $s^{(\lfloor \frac{s}{2} \rfloor)}$ only appears in columns $[0,K)\setminus \{0,2\}$;
\item When $s=1$ or $2$, $s^{(\lfloor \frac{s}{2} \rfloor)}$ only appears in columns $[0,K)\setminus \{0,1\}$;
\item When $s=3$ or $4$, $s^{(\lfloor \frac{s}{2} \rfloor)}$ only appears in columns $[0,K)\setminus \{1,2\}$;
\item When $s\in[6,2K)$ and $\lfloor\frac{s}{2}\rfloor$ is even, $s^{(\lfloor\frac{s}{2}\rfloor)}$ only appears in columns $[0,K)\setminus \{\lfloor\frac{s}{2}\rfloor-1,\lfloor\frac{s}{2}\rfloor\}$;
\item When $s\in[6,2K)$ and $\lfloor\frac{s}{2}\rfloor$ is odd, $s^{(\lfloor\frac{s}{2}\rfloor)}$ only appears in columns $[0,K)\setminus \{\lfloor\frac{s}{2}\rfloor,\lfloor\frac{s}{2}\rfloor+1\}$.
\end{itemize}
\end{proposition}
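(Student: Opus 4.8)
The plan is to prove Proposition~\ref{proposition3} by induction on the odd integer $K$, using the recursion $K=n\mapsto K=n+2$ given by \eqref{Podd}, in complete analogy with the treatment of the even case in Appendix~B. The base case $K=3$ is verified by direct inspection of $\mathbf{P}_3$ in \eqref{P3}: its six entries carry superscripts $\lfloor s/2\rfloor$ and occupy precisely the columns prescribed by the six special cases $s\in\{0,1,2,3,4,5\}$. For the inductive step I assume the statement holds for some odd $n\ge 3$ and establish it for $n+2$.

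Before analysing $\mathbf{P}_{n+2}$ I would first record, by a parallel induction on the definitions \eqref{betan} and \eqref{gamman}, an auxiliary parity lemma: the length-$n$ vector $\beta_n$ consists of exactly the $n$ even integers of $[0,2n)$, and $\gamma_n$ of exactly the $n$ odd integers of $[0,2n)$, each carrying superscript $\lfloor s/2\rfloor$. Indeed the appended entries $(2n-2)^{(n-1)},(2n-4)^{(n-2)}$ in \eqref{betan} are even and extend the even integers of $[0,2n-4)$ to those of $[0,2n)$, and the dual statement for $\gamma_n$ follows from its odd appended entries. This is the one genuinely separate ingredient, and it is what pins down in which of the two new columns an old integer reappears.

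Block \eqref{Podd} adjoins to $\mathbf{P}_n$ exactly the four new integers $2n,2n+1,2n+2,2n+3$ sitting on the diagonals of the four $\mathbf{I}_n$-blocks, so the integer count rises from $2n$ to $2(n+2)$ and each new $s$ has superscript $\lfloor s/2\rfloor$ (namely $n,n,n+1,n+1$), settling the count and superscript claims at once. For the column claims I split into two cases. An \emph{old} integer $s\in[0,2n)$ appears in the top-left copy of $\mathbf{P}_n$ in the columns given by the induction hypothesis (all contained in $[0,n)$) and, by the parity lemma, additionally inside $\beta_n^{\mathrm{T}}$ (if $s$ is even) or $\gamma_n^{\mathrm{T}}$ (if $s$ is odd); reading \eqref{Podd} one sees each transposed vector occupies column $n+1$ in one row-block and column $n$ in another, so $s$ gains exactly $\{n,n+1\}$ and its column set becomes $\bigl([0,n)\setminus E_s\bigr)\cup\{n,n+1\}=[0,n+2)\setminus E_s$, where $E_s$ is the excluded set of the hypothesis (one of $\{0,2\},\{0,1\},\{1,2\}$ in the six small cases, and $\{\lfloor s/2\rfloor-1,\lfloor s/2\rfloor\}$ or $\{\lfloor s/2\rfloor,\lfloor s/2\rfloor+1\}$ according as $\lfloor s/2\rfloor$ is even or odd in the generic range). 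Since $\max E_s\le n-1$ in every case, $E_s$ avoids the two new columns and this matches the prescription for $s$ in the larger array verbatim. A \emph{new} integer lies only on the diagonal of its $\mathbf{I}_n$-block, hence occupies all of $[0,n)=[0,n+2)\setminus\{n,n+1\}$; matching this against the parity rule needs only that $\lfloor 2n/2\rfloor=\lfloor(2n+1)/2\rfloor=n$ is odd and $\lfloor(2n+2)/2\rfloor=\lfloor(2n+3)/2\rfloor=n+1$ is even, which turns the rule into exactly $[0,n+2)\setminus\{n,n+1\}$.

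The argument is mechanical but delicate bookkeeping; the main obstacle is the parity lemma together with correctly reading off from \eqref{Podd} which of columns $n,n+1$ carries $\beta_n^{\mathrm{T}}$ resp.\ $\gamma_n^{\mathrm{T}}$ in each of the four new row-blocks, since an index slip there would corrupt the excluded-column matching. The places where I would be most careful are verifying that the hard-coded cases $s\in\{0,\dots,5\}$ really are the base integers (true because $2n\ge6$), checking the boundary integers $s=2n,\dots,2n+3$ against the $[6,2K)$ clause, and confirming the edge values $\lfloor s/2\rfloor=n-1$ keep $E_s$ strictly inside $[0,n)$.
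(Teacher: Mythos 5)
Your proof is correct and follows essentially the same route as the paper's: induction on odd $K$ in steps of two with base case $\mathbf{P}_3$, tracking that an old integer $s\in[0,2n)$ gains exactly the two new columns $\{n,n+1\}$ through its occurrence in $\beta_n^{\mathrm{T}}$ or $\gamma_n^{\mathrm{T}}$ according to its parity, while the four new integers live only on the $\mathbf{I}_n$ diagonals. The only difference is presentational: you isolate as an explicit auxiliary lemma the fact that $\beta_n$ (resp.\ $\gamma_n$) consists precisely of the even (resp.\ odd) integers of $[0,2n)$, which the paper invokes implicitly by citing \eqref{beta3}, \eqref{gamma3}, \eqref{betan} and \eqref{gamman}.
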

\begin{proof}
we use mathematic induction. First, when $K=3$, it's easy to check that the proposition is true. Assume that the proposition is true when $K=n$. When $K=n+2$, there are four more integers $\{(2n-4)^{(n-2)},(2n-3)^{(n-2)},(2n-2)^{(n-1)},(2n-1)^{(n-1)}\}$ in $\mathbf{P}_{n+2}$ than in $\mathbf{P}_{n}$ by \eqref{Podd}, \eqref{betan} and \eqref{gamman}. So there are $2n+4$ integers in $\mathbf{P}_{n+2}$, and for any integer $s \in[0,2n+4)$, the superscript is $(\lfloor \frac{s}{2} \rfloor)$, and the following statements hold.
\begin{itemize}
\item When $s=0$ or $5$, $s^{(\lfloor \frac{s}{2} \rfloor)}$ only appears in columns $[0,n)\setminus \{0,2\}$ of $\mathbf{P}_n $ by hypothesis. And $s^{(\lfloor \frac{s}{2} \rfloor)}$ appears in $\beta_n$ or $\gamma_n$ if $s$ is even or odd by \eqref{beta3}, \eqref{gamma3}, \eqref{betan} and \eqref{gamman}. So $s^{(\lfloor \frac{s}{2} \rfloor)}$ only appears in columns $[0,n+2)\setminus \{0,2\}$ of $\mathbf{P}_{n+2}$ by \eqref{Podd};
\item When $s=1$ or $2$, $s^{(\lfloor \frac{s}{2} \rfloor)}$ only appears in columns $[0,n)\setminus \{0,1\}$ of $\mathbf{P}_n $ by hypothesis. And $s^{(\lfloor \frac{s}{2} \rfloor)}$ appears in $\beta_n$ or $\gamma_n$ if $s$ is even or odd by \eqref{beta3}, \eqref{gamma3}, \eqref{betan} and \eqref{gamman}. So $s^{(\lfloor \frac{s}{2} \rfloor)}$ only appears in columns $[0,n+2)\setminus \{0,1\}$ of $\mathbf{P}_{n+2}$ by \eqref{Podd};
\item When $s=3$ or $4$, $s^{(\lfloor \frac{s}{2} \rfloor)}$ only appears in columns $[0,n)\setminus \{1,2\}$ of $\mathbf{P}_n $ by hypothesis. And $s^{(\lfloor \frac{s}{2} \rfloor)}$ appears in $\beta_n$ or $\gamma_n$ if $s$ is even or odd by \eqref{beta3}, \eqref{gamma3}, \eqref{betan} and \eqref{gamman}. So $s^{(\lfloor \frac{s}{2} \rfloor)}$ only appears in columns $[0,n+2)\setminus \{1,2\}$ of $\mathbf{P}_{n+2}$ by \eqref{Podd};
\item When $s\in[6,2n+4)$, let us consider the value of $s$.
     \begin{itemize}
     \item If $s\in[6,2n)$ and $\lfloor\frac{s}{2}\rfloor$ is even, $s^{(\lfloor\frac{s}{2}\rfloor)}$ only appears in columns $[0,n)\setminus \{\lfloor\frac{s}{2}\rfloor-1,\lfloor\frac{s}{2}\rfloor\}$ of $\mathbf{P}_n $ by hypothesis. And $s^{(\lfloor\frac{s}{2}\rfloor)}$ appears in $\beta_n$ or $\gamma_n$ if $s$ is even or odd by \eqref{betan} and \eqref{gamman}. So $s^{(\lfloor\frac{s}{2}\rfloor)}$ only appears in columns $[0,n+2)\setminus \{\lfloor\frac{s}{2}\rfloor-1,\lfloor\frac{s}{2}\rfloor\}$ of $\mathbf{P}_{n+2} $ by \eqref{Podd};
     \item If $s\in[6,2n)$ and $\lfloor\frac{s}{2}\rfloor$ is odd, $s^{(\lfloor\frac{s}{2}\rfloor)}$ only appears in columns $[0,n)\setminus \{\lfloor\frac{s}{2}\rfloor,\lfloor\frac{s}{2}\rfloor+1\}$ of $\mathbf{P}_n $ by hypothesis. And $s^{(\lfloor\frac{s}{2}\rfloor)}$ appears in $\beta_n$ or $\gamma_n$ if $s$ is even or odd by \eqref{betan} and \eqref{gamman}. So $s^{(\lfloor\frac{s}{2}\rfloor)}$ only appears in columns $[0,n+2)\setminus \{\lfloor\frac{s}{2}\rfloor,\lfloor\frac{s}{2}\rfloor+1\}$ of $\mathbf{P}_{n+2} $ by \eqref{Podd};
     \item If $s\in[2n,2n+4)$, $s^{(\lfloor\frac{s}{2}\rfloor)}$ only appears in $s^{(\lfloor\frac{s}{2}\rfloor)}\mathbf{I}_n$ of $\mathbf{P}_{n+2}$ by \eqref{Podd}. So $s^{(\lfloor\frac{s}{2}\rfloor)}$ only appears in columns $[0,n)=[0,n+2)\setminus \{n,n+1\}$.
     \end{itemize}
\end{itemize}
\end{proof}

\begin{proposition}
\label{proposition4}
For any odd integer $K\geq 3$, let $\mathbf{P}_{K}$ be recursively got from \eqref{Podd}. If $K=n+2$, $k\in\{n,n+1\}$ and $\mathbf{P}_{n+2}(i,k)=s^{(\lfloor \frac{s}{2} \rfloor)}$, then $s\in[0,2n)$ and the following statements hold.
\begin{itemize}
\item When $s\in\{2,5\}$, the entries in row $i$ and columns $[0,n)\setminus \{0\}$ of $\mathbf{P}_{n+2}$ are all $``*"$s;
\item When $s\in\{1,4\}$, the entries in row $i$ and columns $[0,n)\setminus \{1\}$ of $\mathbf{P}_{n+2}$ are all $``*"$s;
\item When $s\in\{0,3\}$, the entries in row $i$ and columns $[0,n)\setminus \{2\}$ of $\mathbf{P}_{n+2}$ are all $``*"$s;
\item When $s\in[6,2n)$ and $\lfloor\frac{s}{2}\rfloor$ is odd, the entries in row $i$ and columns $[0,n)\setminus \{\lfloor\frac{s}{2}\rfloor+1\}$ of $\mathbf{P}_{n+2}$ are all $``*"$s;
\item When $s\in[6,2n)$ and $\lfloor\frac{s}{2}\rfloor$ is even, the entries in row $i$ and columns $[0,n)\setminus \{\lfloor\frac{s}{2}\rfloor-1\}$ of $\mathbf{P}_{n+2}$ are all $``*"$s.
\end{itemize}
\end{proposition}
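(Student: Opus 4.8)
The plan is to mirror the proof of Proposition~\ref{proposition2}, adapting it to the odd construction \eqref{Podd}. First I would exploit the block layout of \eqref{Podd}: within the top block the two new columns $n$ and $n+1$ are filled entirely with $``*"$, so an integer entry $\mathbf{P}_{n+2}(i,k)=s^{(\lfloor s/2\rfloor)}$ in a column $k\in\{n,n+1\}$ can only come from one of the four lower blocks, each of the shape $c^{(\lfloor c/2\rfloor)}\mathbf{I}_n$ set beside $\beta_n^{\mathrm T}$ or $\gamma_n^{\mathrm T}$. In every such block the columns $[0,n)$ carry the scaled identity pattern $\mathbf{I}_n$ of \eqref{In}, so row $i$ has exactly one non-$``*"$ entry among columns $[0,n)$, namely the diagonal of $\mathbf{I}_n$, located in the column whose index equals the offset $r$ of row $i$ inside its block. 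Since the freshly introduced integers $2n,2n+1,2n+2,2n+3$ sit only on these diagonals and never inside $\beta_n$ or $\gamma_n$, we also read off $s\in[0,2n)$ immediately, in agreement with Proposition~\ref{proposition3}.

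The second step reduces the problem to a positional question about $\beta_n$ and $\gamma_n$. Because $\beta_n^{\mathrm T}$ (respectively $\gamma_n^{\mathrm T}$) occupies exactly the same rows as its companion $\mathbf{I}_n$ block, the entry $s=\beta_n|_r$ (respectively $\gamma_n|_r$) shares its row with the diagonal entry lying in column $r$. Hence the single surviving column asserted by the proposition is precisely the index at which $s$ appears in $\beta_n$ when $s$ is even, or in $\gamma_n$ when $s$ is odd; the two occurrences of a given $s$ (one in column $n$, one in column $n+1$) sit at the same index, so they are consistent.

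It then remains to determine these indices, which I would do by induction on the odd integer $n$ using \eqref{betan} and \eqref{gamman}. In the base case $n=3$ one simply reads $\beta_3$ from \eqref{beta3} and $\gamma_3$ from \eqref{gamma3}, which place $s=2,4,0$ at indices $0,1,2$ and $s=5,1,3$ at indices $0,1,2$; this yields exactly the special assignments of $s\in\{0,\dots,5\}$ to columns $2,1,0,2,1,0$. For the inductive step, the recursion only appends the two entries $(2n-2)^{(n-1)}$ and $(2n-4)^{(n-2)}$ to $\beta_{n-2}$ (and analogously for $\gamma_n$), so every $s<2n-4$ retains its old index and the induction hypothesis applies verbatim; the two appended values land at indices $n-2$ and $n-1$, and since $n$ is odd one checks that $\lfloor (2n-2)/2\rfloor=n-1$ is even with $n-2=\lfloor s/2\rfloor-1$, while $\lfloor (2n-4)/2\rfloor=n-2$ is odd with $n-1=\lfloor s/2\rfloor+1$, matching the two $s\in[6,2n)$ cases.

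The main obstacle is this final parity bookkeeping: one must keep the hard-coded base values $s\le5$ separate from the uniform $\lfloor s/2\rfloor\pm1$ rule for $s\ge6$, and verify the parity of $\lfloor s/2\rfloor$ for the two newly appended entries so that the $+1$ and $-1$ shifts are attached to the correct cases. Everything else follows directly from the identity-block structure of \eqref{Podd}.
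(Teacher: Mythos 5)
Your argument is correct and takes essentially the same approach as the paper's proof: both observe that an integer in column $k\in\{n,n+1\}$ must lie in one of the four lower blocks of \eqref{Podd}, each a scaled $\mathbf{I}_n$ set beside $\beta_n^{\mathrm T}$ or $\gamma_n^{\mathrm T}$, so the unique possibly-non-$``*"$ column of row $i$ within $[0,n)$ is the index at which $s$ occurs in $\beta_n$ (for $s$ even) or $\gamma_n$ (for $s$ odd). You simply make explicit, via the induction on \eqref{betan} and \eqref{gamman}, the index and parity bookkeeping that the paper leaves implicit in its displayed row patterns.
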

\begin{proof}
Since $\mathbf{P}_{n+2}(i,k)=s^{(\lfloor \frac{s}{2} \rfloor)}$ and $k\in\{n,n+1\}$, then from \eqref{Podd} we have $s\in[0,2n)$, and
\begin{itemize}
\item when $s\in\{2,5\}$, the $i$-th row and columns $[0,n)\cup \{k\}$ of $\mathbf{P}_{n+2}$ is as follows
\begin{equation*}
      \bordermatrix{%
                 & 0        & 1 &\cdots &  n-1 & k      \cr
              i  &s'^{(s')} & * &\cdots & *    &  s^{(s)} };
\end{equation*}
\item when $s\in\{1,4\}$, the $i$-th row and columns $[0,n)\cup \{k\}$ of $\mathbf{P}_{n+2}$ is as follows
\begin{equation*}
      \bordermatrix{%
                 & 0 & 1         & 2 & \cdots  & n-1  & k       \cr
              i  & * & s'^{(s')} & * &  \cdots & *    & s^{(s)} };
\end{equation*}
\item when $s\in\{0,3\}$, the $i$-th row and columns $[0,n)\cup \{k\}$ of $\mathbf{P}_{n+2}$ is as follows
\begin{equation*}
      \bordermatrix{%
                 & 0 & 1 & 2       & 3 &\cdots & n-1 & k       \cr
              i  & * & * &s'^{(s')}& * &\cdots & *   & s^{(s)} };
\end{equation*}
\item when $s\in[6,2n)$ and $\lfloor\frac{s}{2}\rfloor$ is odd, the $i$-th row and columns $[0,n)\cup \{k\}$ of $\mathbf{P}_{n+2}$ is as follows
\begin{equation*}
      \bordermatrix{%
                 & 0 & 1 &\cdots & \lfloor\frac{s}{2}\rfloor & \lfloor\frac{s}{2}\rfloor+1   & \lfloor\frac{s}{2}\rfloor+2    & \cdots & n-1 & k       \cr
              i  & * & * &\cdots & *   &  s'^{(s')} & *    & \cdots & *   & s^{(s)} };
\end{equation*}
\item when $s\in[6,2n)$ and $\lfloor\frac{s}{2}\rfloor$ is even, the $i$-th row and columns $[0,n)\cup \{k\}$ of $\mathbf{P}_{n+2}$ is as follows
\begin{equation*}
      \bordermatrix{%
                 & 0 & 1 &\cdots & \lfloor\frac{s}{2}\rfloor-2 & \lfloor\frac{s}{2}\rfloor-1   & \lfloor\frac{s}{2}\rfloor    & \cdots & n-1 & k       \cr
              i  & * & * &\cdots & *   &  s'^{(s')} & *    & \cdots & *   & s^{(s)} }
\end{equation*}
\end{itemize}
where $s'$ is some integer in $[2n,2n+4)$. So the proposition is true.
\end{proof}

Now let us propose the proof of Theorem \ref{th-K_ODD_K-2/K}.
\begin{proof}
First we use mathematic induction to prove that $\mathbf{P}_K$ recursively got from \eqref{Podd} is a $(K, 1, K(K-2), (K-2)^2, 2K)$ DPDA. When $K=3$, $\mathbf{P}_3$ defined by \eqref{P3} is a $(3,1,3,1,6)$ DPDA. When $K=n$, assume that $\mathbf{P}_{n}$ is a $(n, 1, n(n-2), (n-2)^2, 2n)$ DPDA. When $K=n+2$, we should consider the following cases.
\begin{itemize}
\item $\mathbf{P}_{n+2}$ is a $\left(n(n-2)+4n\right) \times (n+2)$ array by \eqref{Podd}, so it has $F=n(n+2)$ rows and $K=n+2$ columns.
\item By \eqref{Podd} there there are $(n-2)^2+4(n-1)=n^2$ $``*"$s in column $k$ of $\mathbf{P}_{n+2}$ for any $k\in[0,n)$, and there are $n(n-2)+2n=n^2$ $``*"$s in column $k$ of $\mathbf{P}_{n+2}$ for any $k\in \{n,n+1\}$. So there are $Z=n^2$ $``*"$s in each column of $\mathbf{P}_{n+2}$. C$1$ holds.
\item There are $2n+4$ integers in $\mathbf{P}_{n+2}$, and for any integer $s\in[0:2n+4)$, the superscript is $(\lfloor \frac{s}{2} \rfloor)$ by Proposition \ref{proposition3}. C$2$ holds.
\item When $\mathbf{P}_{n+2}(i,j)=s^{(\lfloor\frac{s}{2}\rfloor)}$, let us consider the value of $s$.
    \begin{itemize}
    \item If $s\in[0,2n)$ and $j\in[0,n)$, $\mathbf{P}_{n+2}(i,\lfloor\frac{s}{2}\rfloor)=\mathbf{P}_{n}(i,\lfloor\frac{s}{2}\rfloor)=*$ by \eqref{Podd}  and hypothesis;
    \item If $s\in[0,2n)$ and $j\in \{n,n+1\}$, $\mathbf{P}_{n+2}(i,\lfloor\frac{s}{2}\rfloor)=*$ by Proposition \ref{proposition4};
    \item If $s\in[2n,2n+4)$, it's easy to see that $\mathbf{P}_{n+2}(i,\lfloor\frac{s}{2}\rfloor)=*$ by \eqref{Podd}.
    \end{itemize}
So C$3$ holds.
\item  For any two distinct entries $\mathbf{P}_{n+2}(i_1,j_1)$ and $\mathbf{P}_{n+2}(i_2,j_2)$, assume that $\mathbf{P}_{n+2}(i_1,j_1)=\mathbf{P}_{n+2}(i_2,j_2)=s^{(\lfloor\frac{s}{2}\rfloor)}$.
    \begin{itemize}
    \item When $s\in [2n,2n+4)$, $s^{(\lfloor\frac{s}{2}\rfloor)}$ only appears in $s^{(\lfloor \frac{s}{2} \rfloor)}\mathbf{I}_n$ by \eqref{Podd}, so $\mathbf{P}_{n+2}(i_2,j_1)=\mathbf{P}_{n+2}(i_1,j_2)=*$;
    \item When $s\in [0,2n)$, $\lfloor\frac{s}{2}\rfloor \in[0,n)$, let us consider the values of $j_1$ and $j_2$.
         \begin{itemize}
         \item[$1)$] If $j_1,j_2\in[0,n)$, $\mathbf{P}_{n+2}(i_2,j_1)=\mathbf{P}_{n}(i_2,j_1)=*$ and $\mathbf{P}_{n+2}(i_1,j_2)=\mathbf{P}_{n}(i_1,j_2)=*$ by \eqref{Podd} and hypothesis;
         \item[$2)$] If $j_1,j_2\in \{n,n+1\}$, $\mathbf{P}_{n+2}(i_2,j_1)=\mathbf{P}_{n+2}(i_1,j_2)=*$ by \eqref{Podd};
         \item[$3)$] If one of $j_1,j_2$ is in $[0,n)$, and the other is in $\{n,n+1\}$, without loss of generality, assume that $j_1<j_2$. Then $j_1\in [0,n)$ and $j_2\in \{n,n+1\}$. So $\mathbf{P}_{n+2}(i_1,j_1)$ is in $\mathbf{P}_{n}$ and $\mathbf{P}_{n+2}(i_2,j_2)$ is in $\beta{_n^ \mathrm{ T }}$ or $\gamma{_n^ \mathrm{ T }}$ if $s$ is even or odd. Hence, we have $\mathbf{P}_{n+2}(i_1,j_2)=*$ by \eqref{Podd}.
             \begin{itemize}
             \item If $s=0$ or $5$, $s^{(\lfloor\frac{s}{2}\rfloor)}$ only appears in columns $[0,n)\setminus \{0,2\}$ of $\mathbf{P}_{n}$ by Proposition \ref{proposition3}. Then $j_1\neq 0,2$. So $\mathbf{P}_{n+2}(i_2,j_1)=*$ by Proposition \ref{proposition4};
             \item If $s=1$ or $2$, $s^{(\lfloor\frac{s}{2}\rfloor)}$ only appears in columns $[0,n)\setminus \{0,1\}$ of $\mathbf{P}_{n}$ by Proposition \ref{proposition3}. Then $j_1\neq 0,1$. So $\mathbf{P}_{n+2}(i_2,j_1)=*$ by Proposition \ref{proposition4};
             \item If $s=3$ or $4$, $s^{(\lfloor\frac{s}{2}\rfloor)}$ only appears in columns $[0,n)\setminus \{1,2\}$ of $\mathbf{P}_{n}$ by Proposition \ref{proposition3}. Then we have $j_1\neq 1,2$. So $\mathbf{P}_{n+2}(i_2,j_1)=*$ by Proposition \ref{proposition4};
             \item If $s\in[6,2n)$  and $\lfloor\frac{s}{2}\rfloor$ is even, $s^{(\lfloor\frac{s}{2}\rfloor)}$ only appears in columns $[0,n)\setminus \{\lfloor\frac{s}{2}\rfloor-1,\lfloor\frac{s}{2}\rfloor\}$ of $\mathbf{P}_{n}$ by Proposition \ref{proposition3}. Then $j_1\neq \lfloor\frac{s}{2}\rfloor-1$. So $\mathbf{P}_{n+2}(i_2,j_1)=*$ by Proposition \ref{proposition4};
             \item If $s\in[6,2n)$  and $\lfloor\frac{s}{2}\rfloor$ is odd, $s^{(\lfloor\frac{s}{2}\rfloor)}$ only appears in columns $[0,n)\setminus \{\lfloor\frac{s}{2}\rfloor,\lfloor\frac{s}{2}\rfloor+1\}$ of $\mathbf{P}_{n}$ by Proposition \ref{proposition3}. Then $j_1\neq \lfloor\frac{s}{2}\rfloor+1$. So $\mathbf{P}_{n+2}(i_2,j_1)=*$ by Proposition \ref{proposition4}.
             \end{itemize}

        \end{itemize}

    \end{itemize}
  So C$4$ holds.

\end{itemize}

Hence, $\mathbf{P}_{n+2}$ is a $(n+2,1, n(n+2), n^2, 2n+4)$ DPDA. So $\mathbf{P}_K$ recursively got from \eqref{Podd} is a $(K, 1, K(K-2), (K-2)^2, 2K)$ DPDA. Since $K\geq 3$ is odd, let $K=2q+1,q\geq 1$. Then $\mathbf{P}_K$ is a $(2q+1, 1, 4q^2-1, (2q-1)^2, 4q+2)$ DPDA.
\end{proof}

\end{document}